
\documentclass[a4paper,12pt]{article}
\usepackage{amsmath,amsfonts,amsthm,enumerate}
\usepackage{a4wide} 
\usepackage{graphicx} 
\usepackage{tikz}

\def\Ghat{\widehat G}
\def\hatG{\widehat G}

\def\Vhat{\widehat V}
\def\hatV{\widehat V}
\def\Ehat{\widehat E}

\def\hatx{\hat{x}}
\def\haty{\hat{y}}

\def\hatw{\widehat{w}}

\def\NP{\mathrm{NP}}
\def\RP{\mathrm{RP}}

\def\pr#1#2{\langle#1,#2\rangle}

\let\epsilon=\varepsilon  
\let\Upsilon=\varUpsilon
\let\rho=\varrho

\newtheorem{theorem}{Theorem}
\newtheorem{lemma}[theorem]{Lemma}
\newtheorem{corollary}[theorem]{Corollary}

\def\tutte{\textsc{Tutte}}
\def\planartutte{\textsc{PlanarTutte}}
\def\multitutte{\textsc{MultiTutte}}
\def\countprob#1#2#3{
\begin{list}{}{\itemsep=-2pt \leftmargin=\parindent}
 \item {\it Name.} #1
 \item {\it Instance.} #2
 \item {\it Output.} #3
\end{list}}
\def\decisionprob#1#2#3{
\begin{list}{}{\itemsep=-2pt \leftmargin=\parindent}
 \item {\it Name.} #1
 \item {\it Instance.} #2
 \item {\it Question.} #3
\end{list}}

\begin{document}
 
\title{Inapproximability of the Tutte polynomial\\ 
of a planar graph\thanks{
This paper is available on the ArXiv
at 
http://arxiv.org/abs/0907.1724.
The work was partially supported by the EPSRC grant 
\it  Computational Counting
}}
\author{Leslie Ann Goldberg\\
Department of Computer Science\\
University of Liverpool\\
Ashton Bldg, Liverpool L69 3BX, UK \and
Mark Jerrum \\
School of Mathematical Sciences\\
Queen Mary, University of London\\
London E1 4NS, UK }

\date{March 21, 2011}
\maketitle

\begin{abstract}
\noindent
The Tutte polynomial of a graph $G$ is a two-variable polynomial $T(G;x,y)$ that encodes many interesting properties 
of the graph. 
We study the complexity of the following problem, for rationals $x$ and~$y$: 
given 
as input a planar graph $G$, 
determine $T(G;x,y)$.
Vertigan completely mapped the complexity of \emph{exactly} computing the Tutte polynomial of
a planar graph. He showed that the problem can be solved in polynomial time if 
$(x,y)$ is on the hyperbola $H_q$ given by $(x-1)(y-1)=q$ for $q=1$ or $q=2$ or
if $(x,y)$ is one of the two special points  $(x,y)=(-1,-1)$ or $(x,y)=(1,1)$. Otherwise, the problem is \#P-hard.
In this paper, we consider the problem of {\it approximating} $T(G;x,y)$,
in the usual sense of ``fully polynomial randomised approximation scheme'' or FPRAS\null.
Roughly speaking, an FPRAS is required to produce, in polynomial time and with 
high probability, an answer that has 
small relative error.
Assuming that NP is different from RP, we show that there is no FPRAS for the Tutte polynomial 
in a large portion of the $(x,y)$ plane.
In particular, there is no FPRAS if $x>1,y<-1$ or if $y>1,x<-1$ or if $x<0,y<0$ and $ q>5$.
Also, there is no FPRAS if $x<1,y<1$ and $ q=3$.
For $q>5$, our result is intriguing because it shows that there is no FPRAS 
at $(x,y)=(1-q/(1+\epsilon),-\epsilon)$ for any positive $\epsilon$ but it leaves open the limit point $\epsilon=0$,
which corresponds to approximately counting $q$-colourings of a planar graph.
\end{abstract}

\section{Introduction}
 
\subsection{The Tutte Polynomial}

The Tutte polynomial of a graph $G=(V,E)$  (see \cite{Tutte84, Welsh93})
is the two-variable polynomial

\begin{equation}
\label{eq:tutte}
T(G;x,y) = \sum_{A\subseteq E} {(x-1)}^{\kappa(V,A)-\kappa(V,E)}
{(y-1)}^{|A|-n+\kappa(V,A)},
\end{equation}
where $\kappa(V, A)$ denotes the number of connected components of the graph $(V,A)$
and $n=|V|$.
Following the usual convention for the Tutte polynomial~\cite{Sokal05} a graph is allowed
to have loops and/or multiple edges.

Many interesting properties of a graph correspond to evaluations of the Tutte polynomial
at different points $(x,y)$. For example, the number of spanning trees of a
connected graph $G$ is $T(G;1,1)$, the number of acyclic orientations
is $T(G;2,0)$, and the reliability probability $R(G;p)$ of the graph
is an easily-computed multiple of $T(G;1,1/(1-p))$.
For a positive integer~$q$, the Tutte polynomial along the hyperbola $H_q$ given 
by $(x-1)(y-1)=q$ corresponds to the Partition function of the $q$-state Potts model.
See Welsh's book~\cite{Welsh93} for details.

Two particularly interesting Tutte invariants correspond to evaluations
along the $x$ axis and the $y$ axis. 
In particular, 
\begin{itemize} 
\item The chromatic polynomial
$P(G;\lambda)$ of a graph~$G$ with $n$ vertices, $m$ edges and
$k$ connected components is given
by
$$P(G;\lambda)= {(-1)}^{n-k} \lambda^{k}
T(G;1-\lambda,0).$$
When $\lambda$ is a positive integer, $P(G;\lambda)$ counts
the proper $\lambda$-colourings of~$G$.
\item The flow polynomial $F(G;\lambda)$ is given
by
$$F(G;\lambda) = {(-1)}^{m -n+k} T(G;0,1-\lambda).$$
When $\lambda$ is a positive integer, $F(G;\lambda)$
counts the nowhere-zero $\lambda$-flows of~$G$.
\end{itemize}

\subsection{Evaluating the Tutte Polynomial}

For fixed rational numbers $x$ and $y$, 
consider the following computational problem.
\countprob{$\tutte(x,y)$.}{A graph $G=(V,E)$.}{$T(G;x,y)$.}
The parameters $x$ and $y$ are fixed in advance and are not considered
part of the problem instance.   Each choice for $x$ and~$y$ 
defines a distinct computational problem.

Jaeger, Vertigan and Welsh~\cite{JVW90} have
completely mapped the complexity of $\tutte(x,y)$.
They have shown  that $\tutte(x,y)$ is
in FP for any point $(x,y)$ on the hyperbola $H_1$
and when $(x,y)$
is one of the special points
$(1,1)$, $(0,-1)$, $(-1,0)$, and $(-1,-1)$.
They showed that $\tutte(x,y)$ is \#P-hard for every other
pair of rationals $(x,y)$.
See \cite{papadim94} for definitions of FP and \#P;
informally, FP is the extension of the class P from predicates
to more general functions, and \#P is the counting analogue of~NP\null.
Jaeger et al.\ also investigated
the complexity of
evaluating the Tutte polynomial when $x$ and~$y$
are real or complex numbers, but that is beyond the scope of this paper.

Vertigan \cite{vertigan} considered the restriction of $\tutte(x,y)$ in which the
input is restricted to be a \emph{planar} graph.
\countprob{$\planartutte(x,y)$.}{A planar graph $G=(V,E)$.}{$T(G;x,y)$.}
He showed that $\planartutte(x,y)$ is in FP for any point $(x,y)$ on the
hyperbolas $H_1$ or $H_2$,
and when $(x,y)$ is one of the special points $(1,1)$ and $(-1,-1)$.
He showed that $\planartutte(x,y)$  is \#P-hard for every other
pair of rationals $(x,y)$.
The hyperbola~$H_2$ is of particular interest, as the Tutte polynomial
here corresponds to the partition function of the celebrated Ising model
in statistical physics.

\subsection{Approximating the Tutte polynomial}

A \emph{fully polynomial randomised approximation scheme} (FPRAS)
for $\tutte(x,y)$
is a randomised algorithm 
that takes as input a graph $G$ and a constant $\epsilon\in(0,1)$
and outputs a value $Y$ such that, with probability
at least $3/4$, 
$e^{-\epsilon}\,T(G;x,y)\leq Y\leq e^{\epsilon}\,T(G;x,y) $.
The running time of the algorithm is bounded from above by
a polynomial in $n$ (the number of vertices of $G$)  and $\epsilon^{-1}$.
An FPRAS for $\planartutte(x,y)$ is defined 
similarly.  See \cite{jerrum} for further details on fully polynomial randomised approximation schemes.

\begin{figure}[t]
\centering
\includegraphics[width=9truecm]{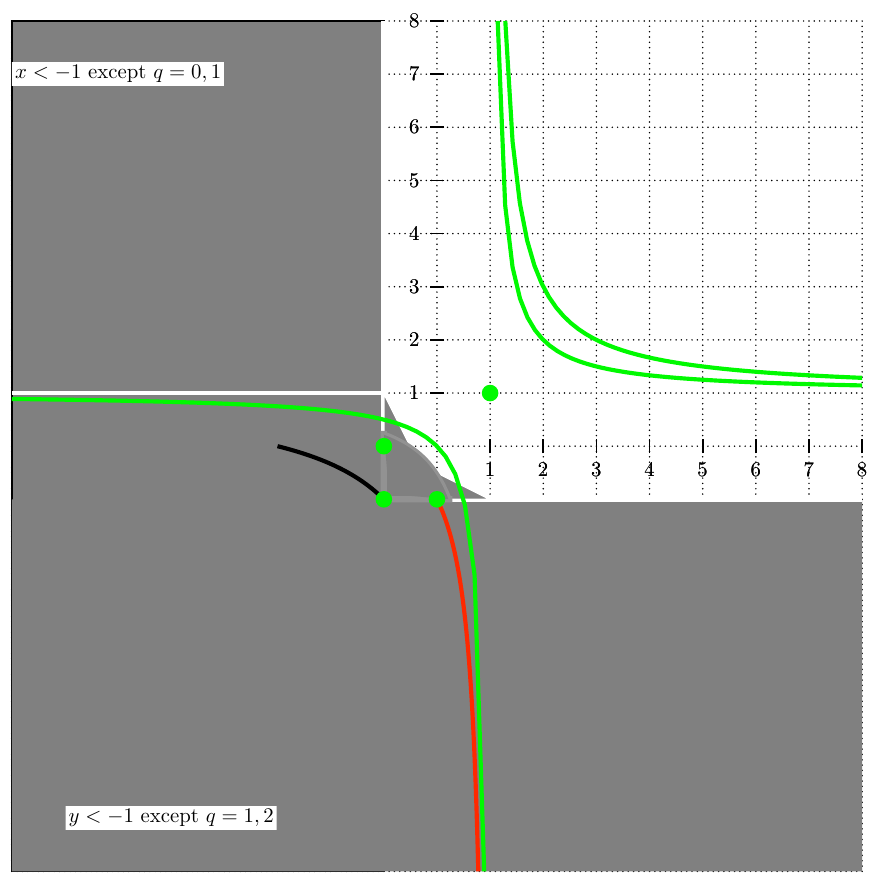}
\caption{The result from \cite{tuttepaper}.
Green points are FPRASable, red points are equivalent to counting 
perfect matchings
and gray points are not FPRASable unless $\mathrm{RP}=\mathrm{NP}$.
We don't know about white points.
The black line, which is the portion of the hyperbola $q=4$
lying in $y\in(-1,0)$, is \#P-hard.  The black points
are harder than gray in a complexity-theoretic sense.  
The black region is presumably more extensive than shown.}
\label{fig:one}
\end{figure}

In earlier work~\cite{tuttepaper}, we considered the problem of determining
for which points $(x,y)$ there is an FPRAS for $\tutte(x,y)$.
Our results are summarized in Figure~\ref{fig:one}.
In particular, under the assumption $\RP\neq \NP$, we showed the following.
\begin{enumerate}[(1)]
\item If $x<-1$ and $(x,y)$ is not on $H_0$ or $H_1$, then
there is no FPRAS at $(x,y)$.
\item If $y<-1$ and $(x,y)$ is not on $H_1$ or $H_2$,
then there is no FPRAS at $(x,y)$.
\item There is no FPRAS at points $(x,y)$ lying in certain regions 
in the vicinity of the origin, contained in the square $-1<x,y<1$.
\item If $(x,y)$ is on $H_2$ and $y<-1$ then approximating
$T(G;x,y)$ is equivalent in difficulty to approximately counting
perfect matchings (resolving the complexity of this is a well-known and interesting open problem).
\end{enumerate}

An interesting consequence of these results is that,
under the assumption $\RP\neq\NP$, there is no
FPRAS at the point $(x,y)=(0,1-\lambda)$ when $\lambda>2$ is a positive
integer. Thus, there is no FPRAS for counting nowhere-zero $\lambda$ flows
for $\lambda>2$. This is   interesting  
since the corresponding decision
problem is in~P, for example, for $\lambda=6$. See \cite{tuttepaper} for details.

\subsection{Approximating the Tutte polynomial of a planar graph}

In this paper we consider the problem of determining for
which points $(x,y)$ there is an FPRAS for $\planartutte(x,y)$.
The results of \cite{tuttepaper} do not help us here because all of
the constructions are badly non-planar.
Our results are summarised in Figure~\ref{fig:two}.

\begin{figure}[t]
\centering
\includegraphics[width=9truecm]{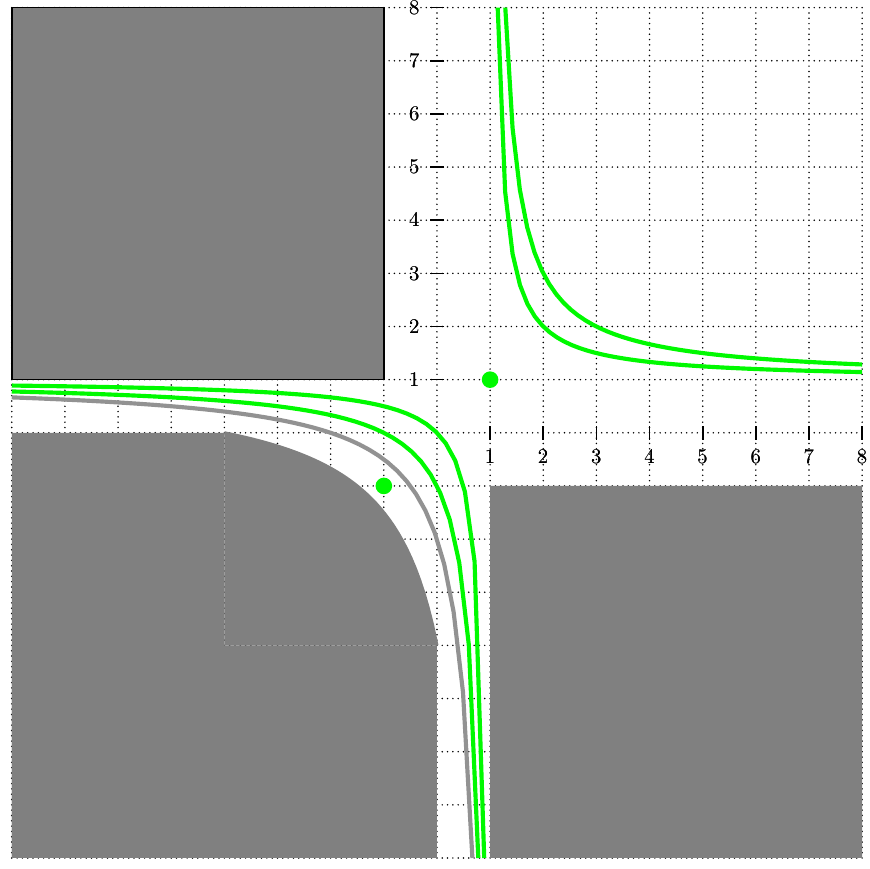}
\caption{The complexity of the planar case. 
The shaded gray regions are shown to be intractable in 
Corollary~\ref{cor:x<-1}.
The lower branch of the $q=3$ hyperbola (also depicted in gray)
is shown to be intractable in Lemma~\ref{lem:q=3}.
As Vertigan has shown~\cite{vertigan}, it is easy to compute the polynomial exactly on the hyperbolas 
$q=1$ and $q=2$ and at the 
two special points $(1,1)$ and $(-1,-1)$. (These are shown in green.)} 
\label{fig:two}
\end{figure}

In particular, under the assumption $\RP\neq \NP$, Corollary~\ref{cor:x<-1}
and Lemma~\ref{lem:q=3}
show that there is no FPRAS for $\planartutte(x,y)$ in the following
cases:
\begin{enumerate}
\item $x<0$, $y<0$ and $q>5$;
\item $x<1$, $y<1$ and $q=3$;
\item $x>1$, $y<-1$;
\item $y>1$, $x<-1$.
\end{enumerate}
 
For  integer $q\geq 4$, the point $x=1-q$, $y=0$ is
of particular interest. As noted earlier $T(G;x,y)$ 
gives the number of proper $q$-colourings of~$G$.
By the $4$-colour theorem, there is at least one $q$-colouring,
so the corresponding decision problem is trivial, but it
is not clear whether there is an FPRAS.
For $q\geq 5$, our result shows that there is no FPRAS for 
any nearby point
$x=1-q/(1+\epsilon)$, $y=-\epsilon$ on 
the hyperbola $H_q$ (for any $\epsilon>0$).
However, the case of colourings itself (corresonding
to the limit point $\epsilon=0$) remains open.
The same intriguing situation occurs
with the flow polynomial points $x=0$, $y=1-q$.

In a recent posting on ArXiv, Kuperberg \cite{Kuperberg} 
independently offers a proof
sketch, based on the complexity theory of quantum computation,
of a result closely related to ours.  If the details in the proof
sketch can be filled in, then it will strengthen our result in the
negative quadrant by (i) relaxing the condition $q \geq 5$ to $q \geq 4$,
and (ii) strengthening the conclusion to \#P-hardness.

\subsection{The multivariate formulation of the Tutte polynomial}

As in \cite{tuttepaper}, we need the multivariate
formulation of the Tutte polynomial in order to 
prove our results. The multivariate formulation is also
known as the \emph{random cluster model}~\cite{Welsh93, Sokal05}.
For $q\in\mathbb{Q}$ and a graph $G=(V,E)$ with edge weights
$w:E\to\mathbb{Q}$,
  the multivariate Tutte polynomial of $G$ is
defined by
$Z(G;q,w)=\sum_{A\subseteq E}
w(A)
q^{\kappa(V,A)}$, where $w(A) = \prod_{e\in A}w(e)$.

Suppose $(x,y)\in\mathbb{Q}^2$ and $q=(x-1)(y-1)$.
For a graph $G=(V,E)$, let $w:E\to\mathbb{Q}$ be the constant
function which maps every edge to the value $y-1$.
Then (see, for example \cite[(2.26)]{Sokal05})
\begin{equation}
T(G;x,y) = {(y-1)}^{-n}{(x-1)}^{-\kappa(E)} Z(G;q,w).
\label{eq:rcequiv}
\end{equation}

So approximating $T(G;x,y)$ is equivalent in
difficulty to approximating $Z(G;q,w)$ for the
constant function $w(e)=y-1$.
However, the multivariate formulation is more general, because
we can assign different weights to different edges of $G$.

Consider the following computational problem, which is a planar version of
one that we considered in~\cite{tuttepaper}.
\countprob{$\multitutte(q;\alpha_1,\alpha_2,\alpha_3)$.}%
{A planar graph $G=(V,E)$ with
edge labelling $w:E\to\{\alpha_{1},\alpha_2,\alpha_3\}$.}%
{$Z(G;q,w)$.}
Our main tool in proving inapproximability (Lemma~\ref{lem:bedrock} below)
is showing that $$\multitutte(q;\alpha_1,\alpha_2,\alpha_3)$$ 
is difficult to approximate 
if $\alpha_1\notin [-2,0]$, $\alpha_2\in(-2,0)$ and $\alpha_3<-1$.
(Note that $\alpha_3$ might be equal to $\alpha_1$ or $\alpha_2$.)

\section{Technical Preparation}
In this section we introduce a gadget (weighted graph) with certain 
useful properties.  Although the graph is very simple, the edge weights 
must be carefully tuned  to achieve the desired properties.  
We need to be able to ``implement'' these particular edge weights in terms of the
actual weights that are available to us. 

\subsection{Implementing new edge weights}\label{sec:weightImplement}
 Let 
$W$ be a set of edge weights 
(for example, $W$ might contain the edge weights $\alpha_1$, $\alpha_2$ and~$\alpha_3$ from above)
and fix a value $q$.
Let $w^*$ be a weight (which may not be in $W$) which we want to ``implement''.
Suppose that 
there is a planar graph~$\Upsilon$,
with distinguished vertices $s$ and~$t$
on the outer face,
and a weight function $\hatw:E(\Upsilon)\rightarrow W$
such that
\begin{equation}
\label{eq:implement}
w^* = q Z_{st}(\Upsilon)/Z_{s|t}(\Upsilon),
\end{equation}
where $Z_{st}(\Upsilon)$ denotes the contribution to
$Z(\Upsilon;q,\hatw)$ arising from edge-sets $A$ in which $s$ and $t$ are
in the same component.
That is, $Z_{st}(\Upsilon) = \sum_{A}
\hatw(A)
q^{\kappa(V,A)}$, where the sum is over subsets $A\subseteq E(\Upsilon)$ in which 
$s$ and $t$ are in the same component.
Similarly, $Z_{s|t}$ denotes the contribution to 
$Z(\Upsilon;q,\hatw)$ arising from edge-sets $A$ in which $s$ and $t$ are in different components.
In this case, we say that $\Upsilon$ and $\hatw$ implement $w^*$
(or even that $W$ implements~$w^*$).

The purpose of ``implementing''  edge weights is this.
Let $G$ be a graph with edge-weight function $w$.
Let $f$ be some edge of $G$ with edge weight $w(f)=w^*$.
Suppose that $W$ implements $w^*$.
Let $\Upsilon$ be a planar graph with distinguished vertices $s$ and $t$
with a weight function $\hatw$ satisfying (\ref{eq:implement}). 
Construct the weighted graph $G'$ 
by replacing edge $f$ with a copy of $\Upsilon$ (identify $s$ with either endpoint of $f$
(it doesn't matter which one) and identify $t$ with the other endpoint of $f$ and remove edge $f$).
Let the weight function $w'$ of $G'$ inherit weights from $w$ and $\hatw$ (so $w'(e)=\hatw(e)$ if $e\in E(\Upsilon)$ and
$w'(e)=w(e)$ otherwise).
Then the definition of the multivariate Tutte polynomial gives
\begin{equation}
\label{eq:shift}
Z(G';q,w') = \frac{Z_{s|t}(\Upsilon)}{q^2} Z(G;q,w).\end{equation}
So, as long as $q\neq 0$ and $Z_{s|t}(\Upsilon)$ is easy to evaluate,
evaluating the multivariate Tutte polynomial of $G'$ with weight function $w'$ is
essentially the same as evaluating the multivariate Tutte polynomial of $G$ with weight function~$w$.

Two especially useful implementations are series and parallel compositions.
These are explained in detail in \cite[Section 2.3]{JacksonSokal}.
So we will be brief here.
Parallel composition is the case in which $\Upsilon$ consists of two parallel edges $e_1$ and $e_2$
with endpoints $s$ and $t$ and  $\hatw(e_1)=w_1$ and $\hatw(e_2)=w_2$.
It is easily checked from Equation~(\ref{eq:implement})
that $w^* = (1+w_1)(1+w_2)-1$. Also, the extra factor in Equation~(\ref{eq:shift}) cancels,
so in this case $Z(G';q,w') = Z(G;q,w)$.

Series composition is the case in which $\Upsilon$ is a length-2 path from $s$ to $t$ consisting of edges $e_1$ and $e_2$
with $\hatw(e_1)=w_1$ and $\hatw(e_2)=w_2$.
It is easily checked from Equation~(\ref{eq:implement})
that $w^* =  w_1w_2/(q+w_1+w_2)$. Also, the extra factor in Equation~(\ref{eq:shift}) is $q+w_1+w_2$,
so in this case $Z(G';q,w') = (q+w_1+w_2) Z(G;q,w)$.
It is helpful to note that
$w^*$ satisfies
$$\left(1+\frac{q}{w^*}\right) = \left(1+\frac{q}{w_1}\right) \left(1+\frac{q}{w_2}\right).$$

We say that there is a ``shift'' 
from $(q,\alpha)$ to $(q,\alpha')$ if 
there is an implementation of $\alpha'$ consisting of some $\Upsilon$ and  
$\hatw:E(\Upsilon)\rightarrow W$ where $W$ is the singleton set $W=\{\alpha\}$.
This is the same notion of ``shift'' that we used in \cite{tuttepaper}. 
Taking $y=\alpha+1$ and $y'=\alpha'+1$
and defining $x$ and $x'$ by $q=(x-1)(y-1)=(x'-1)(y'-1)$ we
equivalently refer to this as a shift from $(x,y)$ to $(x',y')$.

Thus, the $k$-thickening of Jaeger, Vertigan and Welsh~\cite{JVW90}
is the parallel composition of $k$ edges of weight $\alpha$.
It implements $\alpha'=(1+\alpha)^k-1$ and is a shift from $(x,y)$ to
$(x',y')$ where $y' = y^k$ (and $x'$ is given by $(x'-1)(y'-1)=q$).
 Similarly, the $k$-stretch is the series composition of $k$ edges of weight $\alpha$.
It implements an $\alpha'$
satisfying
$$1+\frac{q}{\alpha'}= {\left(1+\frac{q}{\alpha}\right)}^k,$$
It is a shift from $(x,y)$ to 
$(x',y')$ where $x'=x^k$.
(In the classical bivariate $(x,y)$ parameterisation, there is effectively
one edge weight, so the stretching or thickening is applied uniformly
to every edge of the graph.)

Since it is useful to switch freely between $(q,\alpha)$ coordinates and $(x,y)$ coordinates we also
refer to the implementation in Equation~(\ref{eq:implement}) as an implementation of the 
point $(x,y)=(q/w^*+1,w^*+1)$ using the points
$$\{
(x,y)=(q/w+1,w+1)\mid w\in W
\}.$$

\subsection{Global Constants}
\label{sec:A3}

Our proofs will use several global constants which depend upon~$q$
but do not depend upon the problem instances in our reductions.
The definitions of these constants are provided here for easy reference.
The purpose of all of these constants will become clear later,  
but as a rough guide, the constants $A^-$, $A^+$, $B^-$ and $B^+$
will be lower and upper bounds on the (absolute values of the) edge weights, 
$a$ and~$b$, that we use in our gadgets.
The edge weights themselves will depend on the problem instance, but
it is important for the proof that these lower and upper bounds do not depend upon the problem instance --- they only
depend upon~$q$.

Let  $f(x)$ be the function 
$f(x) = x^3 + 3 x^2$. We start by defining  several quantities
for which the definitions differ  depending on whether $q<0$ or $q>5$.
 
\medskip\noindent{\bf Case 1: $q>5$:}\quad
$\chi = \min(1,(q-5)/6)$,
$\eta=3/4$,
$A^-=1/2$, $A^+=q$,
$B^-=q$ and 
$B^+ = 10 q^3$.
 
\medskip\noindent{\bf Case 2: $q<0$:}\quad
$\chi=\min(1,|q|)$. 
To define the other constants, it helps to make a few observations.
Let $g(y) = f(-3-y)$ and note that $g(0)=0$ and that $g'(y)<0$ for $y>0$ so $g(y)$ decreases
as $y$ increases from $0$.
Now  
let $\eta>0$ be the real solution of
$g(\eta)=q/2$.
Let $A^-=3+\eta$.
Then let $y^*>0$ be the real solution of
$g(y^*)=q$.
Let $A^+=3+y^*$.
Let $B^- = |q|/3$ and let $B^+ = 4|q|/3 + 2$.
\medskip
 
Note that, in both cases, $0< A^- <A^+$ and $0<B^-<B^+$ and $\eta>0$.
Finally,  define
\begin{itemize}
\item
$A^* = 1 + 3({A^+})^4 + 9 ({A^+})^3 + 3 ({A^+})^2 + 3 A^+ (1+|q|)$,
\item $Q = \max(|q|,|q|^{-1})$, 
\item $\mu =  q^2 A^+ ({B^+})^2$,
\item $\tau = |q|({A^+})^2 (A^++3) ({B^+})^3$, and
\item $M=\max(1,\mu,\tau)$.
\end{itemize}

\subsection{Implementing useful edge weights}
\label{sec:A4}

In much of the technical part of the paper, we will have at our disposal three edge weights,
$\alpha_1$, $\alpha_2$ and $\alpha_3$  
such that 
$\alpha_1\notin [-2,0]$, $\alpha_2\in(-2,0)$ and $\alpha_3<-1$.
($\alpha_3$ might be equal to $\alpha_1$ or $\alpha_2$.)
Now that we have defined the global constants in Section~\ref{sec:A3},
we state some lemmas 
showing that we can use $\alpha_1$, $\alpha_2$ and $\alpha_3$ to
implement certain edge weights, $a$, $b$ and~$\beta$, which we
will later use in our gadgets. 
As will become apparent below, 
the precise definitions of $a$, $b$ and $\beta$ will depend upon 
two accuracy parameters~$\rho$ and~$\hat\rho$.
When we use the lemmas, we will take these to be very small (depending on the input sizes in our reductions).
We defer the proofs of the lemmas until Section~\ref{sub:deferred} because are they mainly technical, and
are not necessary for understanding our main 
argument.

\begin{lemma}  \label{lem:impa}
Suppose $q\notin [0,5]$
 and that
$\alpha_1\notin [-2,0]$, $\alpha_2\in(-2,0)$ and $\alpha_3<-1$.
Given a positive 
constant~$\rho$
which is  sufficiently small with respect to~$q$, $\alpha_1$, $\alpha_2$, and
$\alpha_3$, 
there is a planar graph $\Upsilon$ (depending on  
$\rho$) and a weight 
function $\hatw:E(\Upsilon)\rightarrow \{\alpha_1,\alpha_2,\alpha_3\}$ 
that implements a weight 
$a$, such that
\begin{gather}
\label{eqa1} A^- \leq |a| \leq A^+,\\
\label{eqa2}   q + \rho  < f(a) \leq q + 2 \rho,\>\mbox{and }\\
\label{eqa3} |f(a)|\geq  \eta.
\end{gather}
The size of $\Upsilon$ is at most a polynomial in 
$\log(\rho^{-1})$.
\end{lemma}

\begin{lemma}  \label{lem:impb}
Suppose $q\notin [0,5]$
 and that
$\alpha_1\notin [-2,0]$, $\alpha_2\in(-2,0)$ and $\alpha_3<-1$.
Suppose, for a positive value~$\rho$,
which is  sufficiently small with respect to~$q$, $\alpha_1$, $\alpha_2$, 
and $\alpha_3$,
the value~$a$ satisfies inequalities (\ref{eqa1}), (\ref{eqa2}) and~(\ref{eqa3}).
Let 
\begin{equation}
\label{eqc} c = a^2+3a+q\end{equation}
Given a positive 
constant~$\hat\rho$
which is sufficiently small with respect to~$q$, $\alpha_1$, $\alpha_2$, and $\alpha_3$,
there is a planar graph $\Upsilon$ (depending on  
$\hat \rho$) and a weight 
function $\hatw:E(\Upsilon)\rightarrow \{\alpha_1,\alpha_2,\alpha_3\}$ 
that implements a weight 
$b$, such that
\begin{align}
\label{eqb1} & B^- \leq |b| \leq B^+,\> \mbox{and }\\
\label{eqb2} & - \hat \rho \leq b + c \leq \hat \rho.
\end{align}
The size of $\Upsilon$ is at most a polynomial in 
$\log({\hat \rho}^{-1})$.
\end{lemma}

\begin{lemma}  \label{lem:impbeta}
Suppose $q\notin [0,5]$
and that $\alpha_2\in(-2,0)$.
Given a positive 
constant~$\rho$
which is  sufficiently small with respect to~$q$ and $\alpha_2$, there is a planar graph $\Upsilon$ (depending on  
$\rho$) and a weight 
function $\hatw:E(\Upsilon)\rightarrow \{ \alpha_2 \}$ 
that implements a weight 
$\beta$, such that 
$|1+\beta|\leq \rho$.
The size of $\Upsilon$ is at most a polynomial in 
$\log(\rho^{-1})$.
\end{lemma} 
 
 \subsection{A useful gadget}
 \label{sec:A5}
 
Suppose $a$ and $b$ are edge weights.
Let $Y$ be a weighted graph with weight function $w$ defined as follows.
$Y$ will have
vertex set  $V(Y)=\{0,1,2,\overline{0},\overline{1},\overline{2}\}$. 
The edge set  $E(Y)$ of $Y$
consists of three edges
$(0,\overline{0})$, $(1,\overline{1})$ and $(2,\overline{2})$ of weight~$b$
and three edges $(\overline{0},\overline{1})$, $(\overline{1},\overline{2})$ and
$(\overline{2},\overline{0})$ of weight~$a$. 
See Figure~\ref{fig:Y}.

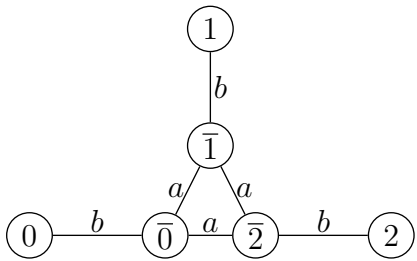
\begin{figure}

\centering{
 \begin{tikzpicture}[fill=white,scale=0.7,
line width=0.5pt,inner sep=1pt,minimum size=2.5mm]
\pgfsetxvec{\pgfpoint{1.7cm}{0cm}}
\pgfsetyvec{\pgfpoint{0cm}{1.7cm}}
\path
(3.5,2) node
[draw,fill,circle,minimum size=0.6cm](u0){ $ 0$}
(5,2) node [draw,fill,circle,minimum size=0.6cm](d){$\overline{0}$}
(6,2) node [draw,fill,circle,minimum size=0.6cm](dd){$\overline{2}$}
(5.5,3) node [draw,fill,circle,minimum size=0.6cm](e){$\overline{1}$}
(5.5,4.3) node [draw,fill,circle,minimum size=0.6cm](uv1){ $ 1$}
(7.5,2) node [draw,fill,circle,minimum size=0.6cm](uv2){ $ 2$}
    ;
\draw[-]  (u0)-- node[above] {$b$}
(d)-- node[above] {$a$} (dd)-- node[above] {$b$} (uv2);
\draw[-] (d)-- node[left] {$a$}(e)-- node[right] {$a$}(dd);
\draw[-] (e)-- node[right] {$b$} (uv1);
\end{tikzpicture}

 }
 \caption{The  gadget $Y$.}
\label{fig:Y}
\end{figure}

For a fixed $q$, let 
$Z_{0|1|2}$ denote the
contribution to $Z(Y;q,w)$ arising from edge sets $A$ in which the vertices~$0$, $1$ and $2$
are in distinct components.
Thus, $Z_{0\mid1\mid 2} = \sum_{A}
\prod_{e\in A}w(e)
q^{\kappa(V(Y),A)}$, where the sum is over all subsets $A\subseteq E(Y)$  
such that $0$, $1$ and $2$ are all in distinct components.
Similarly, let $Z_{0|12}$ denote the contribution to 
$Z(Y;q,w)$ arising from edge sets $A$ in which the vertex~$0$ is in one component
and the vertices $1$ and $2$ are in another, distinct, component.
Finally, let $Z_{012}$ denote the contribution to $Z(Y;q,w)$ denote the contribution to 
$Z(Y;q,w)$ arising from edge sets $A$ in which the vertices~$0$, $1$ and $2$ are all in the same component.
Define $c$ via Equation (\ref{eqc}). 
From the definition of $Z_{0|12}$ we see that
\begin{equation}
\label{eq:10Feb1}
Z_{0|12}=q^{2}ab^{2}(c+b) 
\end{equation}

Similarly,
$$
Z_{0|1|2}=q^{3}\big(b^{3}+3b^{2}(2a+q)
   +(3b+q)(a^{3}+3a^{2}+3aq+q^{2})\big).
$$

Let
\begin{equation}
d=a^2+3a+q+b\label{eq:d}
\end{equation}
and
\begin{equation}
 e=a^3+3a^2-q.\label{eq:e}
\end{equation}
Then 
 \begin{align} \nonumber
Z_{0|1|2} & =  -q^{3}a^{2}(a+3)(a^{3}+3a^{2}-q)
+ 
 d^3 q^3 +
 d^2 (-3a - 3a^2) q^3 + d q^3 (9 a^3 + 3 a^4 - 3 a q)
\\ \nonumber
& = 
-q^{3}a^{2}(a+3) e
+ 
 d^3 q^3 -
 d^2 (3a + 3a^2) q^3 + d q^3 (9 a^3 + 3 a^4 - 3 a q)
\\ &= \label{eq:10Feb2}
-q^{3}a^{2}(a+3) 
\left(
e
- \frac{d^3}{a^{2}(a+3)}
+  \frac{
 d^2 (3a + 3a^2)}{a^{2}(a+3)}
- \frac{  d  (9 a^3 + 3 a^4 - 3 a q)}
{a^{2}(a+3)}
\right).
\end{align}

Also 
\begin{equation}
\label{eq:10Feb3}
Z_{012} =qa^{2}(a+3)b^{3}.
\end{equation}

\subsection{A lower bound on the Tutte polynomial}

We conclude our technical preparations by presenting a lemma which 
gives a positive lower bound on the (multivariate) Tutte polynomial of a planar graph for $q>5$. The lemma is 
essentially due to Woodall~\cite[Theorem 1]{Woodall}. 
(The method can be traced 
back to~\cite{BL46}.) However, we need two slight generalisations. First,
Woodall's proof was actually about the chromatic polynomial, which corresponds to the specialisation of the Tutte polynomial
in which $w(e)=-1$ for every edge~$e$. In our lemma, we will ensure that $w(e)$ is always close to~$-1$ but it will not
be exactly equal to~$-1$. Second, Woodall's objective was to show that the polynomial is
positive. We will need something slightly stronger --- namely, a strictly positive lower bound.
Woodall's proof technique suffices to provide this.

\begin{lemma}  
\label{lem:Woodall}
Suppose $q>5$.
Suppose $\rho\in (0,1)$ and $\zeta\in (0,1)$  satisfy
$q\geq5(1+\rho)+\zeta$.
For any simple planar graph~$G=(V,E)$ and any edge-weight function $w$
satisfying
$  |1+w(e)|\leq \rho$ for all $e\in E$,
$Z(G;q,w)\geq \zeta^{|V|}$.
\end{lemma}

\begin{proof}
We follow the proof of \cite[Theorem 1]{Woodall} due to Woodall.
We can assume without loss of generality that $G$ is connected (otherwise consider the components
separately).
The proof is by induction on $n$,  the number of  vertices of~$G$.
The base case, in which  $n=1$, is straightforward since $G$ has no loops.
Suppose  $n>1$.  
Since $G$ is planar, it has a vertex~$v$  
whose degree, $\ell$, is  between~$1$ and~$5$.
Let $e_1=(v,v_1),\ldots,e_\ell=(v,v_\ell)$ be the edges incident at~$v$. Let 
$\mathcal{A}_0 = \{A\subseteq E  \mid A\cap \{e_1,\ldots,e_\ell\}=\emptyset\}$ and
$\mathcal{A}_i = \{A \subseteq E \mid A\cap \{e_1,\ldots,e_i\}=\{e_i\}\}$.
Let $Z_i(G;q,w) = \sum_{A\in \mathcal{A}_i} w(A) q^{\kappa(V,A)}$, so $Z(G;q,w) = \sum_{i=0}^\ell Z_i(G;q,w)$.
It is easy to see (using the definition of the multivariate Tutte polynomial)
that $Z_0(G;q,w) = q Z(G-v;q,w)$ where, in the expression $Z(G-v,q,w)$, we view
$w$ as a weight function $w:E\setminus\{e_1,\ldots,e_\ell\}\rightarrow \mathbb{Q}$.
Also, for $i\in[\ell]$,
$Z_i(G;q,w) = w(e_i) Z(G'_i;q,w)$, where
$G'_i$ is the multigraph formed from~$G$ by deleting $e_1,\ldots,e_{i-1}$ 
and contracting~$e_i$
(i.e., identifying its endpoints and then deleting it). 
Note that   $G'_i$ may have parallel edges (though it has no loops).
However, if we consider two parallel edges $e$ and $f$
with weights $w(e)$ and $w(f)$, we know from Section~\ref{sec:weightImplement}
that the parallel composition of these two edges implements
the single edge weight $w^*=(1+w(e))(1+w(f))-1$.
Thus, we can replace these two parallel edges with a single edge~$e'$ with weight~$w^*$
without changing the value of the Tutte polynomial.
Also, note that since $|1+w(e)|\leq \rho$, $|1+w(f)|\leq \rho$ and $\rho\in(0,1)$,
we also have $|1+w^*|\leq \rho$.
We conclude that $Z(G'_i;q,w)=Z(G_i;q,w_i)$, where $G_i$ is the simple graph underlying~$G'_i$
and $w_i$ is the induced weight function, which is ``good'' in the sense that $|1+w_i(e)|\leq \rho$ for 
every edge $e$ of $G_i$.
The graph~$G_i$ has
vertex set $V-v$ and edge set $E_i=E\setminus\{e_1,\ldots,e_\ell\}\cup \{f_1,\ldots,f_{\ell_i}\}$,
where $\ell_i$ is the number of vertices in $v_{i+1},\ldots,v_\ell$ which
are not neighbours of $v_i$ in~$G$
and $f_1,\ldots,f_{\ell_i}$ are new edges connecting $v_i$ to these vertices.

 Let 
$\mathcal{B}_0 = \{A\subseteq E_i  \mid A\cap \{f_1,\ldots,f_{\ell_i}\}=\emptyset\}$ and
$\mathcal{B}_j = \{A \subseteq E_i \mid A\cap \{f_1,\ldots,f_j\}=\{f_j\}\}$.
Let $Z'_j(G_i;q,w_i) = \sum_{A\in \mathcal{B}_i} w_i(A) q^{\kappa(V-v,A)}$, 
so $Z(G_i;q,w_i) = \sum_{j=0}^{\ell_i} Z'_j(G_i;q,w_i)$.
Once again, $Z'_0(G_i;q,w) = Z(G-v;q,w_i)$.
Also, for $j\in[\ell_i]$, $Z'_j(G_i;q,w_i) = w_i(f_j) Z(G'_{i,j};q,w_i)$,
where $G'_{i,j}$ is the multigraph formed from~$G_i$ by deleting $f_1,\ldots,f_{j-1}$, 
and contracting~$f_j$.
As before, there is a simple graph~$G_{i,j}$ and a ``good'' weight function $w_{i,j}$
such that $Z(G'_{i,j};q,w_i) = Z(G_{i,j};q,w_{i,j})$. $G_{i,j}$~has vertex set $V-v$,
so we know by induction that $Z(G_{i,j};q,w_{i,j})\geq 0$. 

Putting all of the above together,
\begin{align*}
Z(G;q,w) &= q Z(G-v;q,w) + \sum_{i=1}^\ell  w(e_i) Z(G_i;q,w_i) \\
&= q Z(G-v;q,w) + \sum_{i=1}^\ell  w(e_i) \left(Z(G-v;q,w_i) +\sum_{j=1}^{\ell_i} w_i(f_j) Z(G_{i,j};q,w_{i,j})  \right).
\end{align*}

Since $w$ and $w_i$ are ``good'' weight functions, both $w(e_i)$ and $w_i(f_j)$ are at most~$0$.
Thus, $w(e_i) w_i(f_j) \geq 0$ so we get
\begin{align*}
Z(G;q,w) &\geq q Z(G-v;q,w) + \sum_{i=1}^\ell  w(e_i)  Z(G-v;q,w_i).\\
&= Z(G-v;q,w) \left( q + \sum_{i=1}^\ell  w(e_i)  \right).\\
&\geq Z(G-v;q,w)  \,\zeta,\\
\end{align*}
where the final inequality follows from $q>5(1+\rho)+\zeta$ and from the fact that the weight function~$w$ is good.
The result follows by induction.
\end{proof}

\section{Proving inapproximability}

\subsection{The starting point}

Our starting point is the following problem.
\decisionprob{\textsc{Planar cubic Maximum Independent Set}.}%
{A cubic planar graph $G$ and a positive integer $K$.}%
{Does $G$ contain an independent set of size at least $K$?}

\begin{lemma}
\textsc{Planar cubic Maximum Independent Set} is \textup{NP}-complete.
\end{lemma}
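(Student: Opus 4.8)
The plan is to obtain \textsc{Planar cubic Maximum Independent Set} by a reduction from a known NP-complete problem, namely \textsc{Maximum Independent Set} restricted to planar graphs of maximum degree~$3$. It is classical (Garey and Johnson) that \textsc{Maximum Independent Set} is NP-complete even for planar graphs of maximum degree~$3$; I would cite this rather than re-prove it. Membership in NP is immediate: a set $S$ of size at least $K$ is a polynomial-size certificate whose independence is checkable in polynomial time. So the whole content is to pass from maximum degree~$3$ to exactly~$3$-regular (cubic) while keeping the graph planar and controlling the independence number.

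\medskip

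The key step is a local gadget replacement that ``pads'' every vertex of degree~$1$ or~$2$ up to degree~$3$ without disturbing planarity or changing the optimum independent-set size in a predictable way. A convenient choice is to take a small planar gadget with some number of ``ports'' of degree~$2$ (so that attaching a port to a deficient vertex raises that vertex's degree by one) such that the gadget has the property that in any maximum independent set of the whole graph, the ports are \emph{not} used, and the gadget contributes a fixed additive constant $c$ to the independence number regardless of the behaviour at the interface. For instance, one can attach, to a vertex $v$ of degree~$3-d$, a copy of a planar ``degree-filler'' consisting of a short cycle or a $K_4$-minus-an-edge type fragment with $d$ pendant attachment points, chosen so that: (i) all new vertices have degree exactly~$3$ after attachment; (ii) $v$ itself reaches degree~$3$; and (iii) the gadget's internal vertices can always be chosen into an independent set in a way that avoids the attachment points, so that $\alpha(G')=\alpha(G)+t\cdot c$ where $t$ is the total number of gadgets added (a quantity computable in polynomial time from the degree sequence). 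Then $G$ has an independent set of size $\ge K$ iff $G'$ has one of size $\ge K + t c$, giving the reduction. Since each gadget is a fixed-size planar piece glued along faces incident to $v$, planarity is preserved, and the construction is clearly polynomial-time.

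\medskip

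The main obstacle is designing the degree-filler gadget so that all three properties hold simultaneously: it must be planar, it must bring \emph{all} involved vertices to degree exactly~$3$ (both the deficient vertex and every newly introduced vertex), and it must have the ``forcing'' property that an optimal solution never needs to use an attachment port, so that the additive shift $c$ is genuinely independent of the rest of the graph. Getting the degree bookkeeping exactly right---so that nothing ends up with degree~$2$ or~$4$---is the fiddly part, and one typically needs slightly different gadgets for the $d=1$ and $d=2$ cases, or a single gadget applied with multiplicity. One must also handle vertices of degree~$0$ (isolated vertices) separately, e.g.\ by discarding them and adjusting $K$, or by attaching a self-contained cubic planar component of known independence number. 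Once the gadget is pinned down, verifying $\alpha(G')=\alpha(G)+tc$ is a routine exchange argument (given any independent set of $G'$, reroute any port usage into the gadget interior without decreasing size), and the equivalence of the two decision instances follows immediately.
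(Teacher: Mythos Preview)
Your approach matches the paper's exactly: reduce from planar maximum-degree-$3$ \textsc{Maximum Independent Set} (via the Garey--Johnson vertex-cover result), then pad deficient vertices to degree~$3$ with a planar gadget. The paper dispatches your ``main obstacle'' with a single concrete gadget: the unique simple planar graph $T$ on six vertices with degree sequence $(1,3,3,3,3,3)$, attached by identifying its degree-$1$ vertex $r$ with the deficient vertex, once per missing unit of degree. Each copy of $T$ adds exactly~$2$ to the independence number regardless of whether the attachment vertex lies in the independent set, so there is no need for separate gadgets for the $d=1$ and $d=2$ cases, no special handling of isolated vertices, and no forcing/exchange argument.
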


\begin{proof}
This problem is essentially the same as ``Node cover in planar graphs with 
maximum degree~3'', which was shown to be NP-complete by Garey and 
Johnson~\cite[Lemma~1]{GJ77}.  First, the complement
of a minimum node (or vertex) cover in a graph is a maximum independent 
set.  Thus Garey and Johnson's problem is the same as ``Maximum independent
set in a planar graph with maximum degree~3''.  So we just need to show that 
we can transform a planar graph with maximum degree~3 into a cubic graph
in such a way that the size of a maximum independent set changes in a controlled way.

It is easily checked that there is a (unique) simple 
planar graph~$T$ with degree sequence
$(1,3,3,3,3,3)$.  See Figure~\ref{fig:T}.
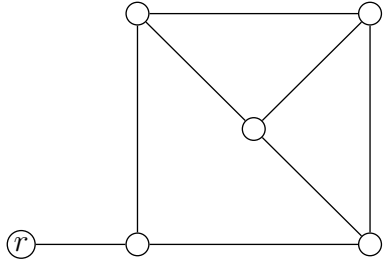
\begin{figure}
\centering{
\begin{tikzpicture}[fill=white,scale=0.45,
line width=0.5pt,inner sep=1pt,minimum size=2.5mm]
\pgfsetxvec{\pgfpoint{1.7cm}{0cm}}
\pgfsetyvec{\pgfpoint{0cm}{1.7cm}}
\path
(0,0) node [draw,fill,circle,minimum size=0.3cm](v1){$r$}
(2,0) node [draw,fill,circle,minimum size=0.3cm](v2){}
(6,0) node [draw,fill,circle,minimum size=0.3cm](v5){}
(4,2) node [draw,fill,circle,minimum size=0.3cm](v6){}
(2,4) node [draw,fill,circle,minimum size=0.3cm](v3){}
(6,4) node [draw,fill,circle,minimum size=0.3cm](v4){}
;
\draw[-] (v1)--(v2)--(v5)--(v4)--(v3)--(v2);
\draw[-] (v3)--(v6)--(v4);
\draw[-] (v5)--(v6);
\end{tikzpicture}

 }
 \caption{The graph $T$.}
 \label{fig:T}
 \end{figure}
Denote by~$r$ the unique vertex of degree~1.  Given a planar graph with 
maximum degree~3, we can form a planar cubic graph by attaching (via vertex~$r$)
the appropriate number of 
copies of~$T$ to the deficient vertices.  It is easily checked that each copy of~$T$
increases the size of a maximum independent set by~2.
\end{proof}

We will use the following  
variant of 
{\textsc{Planar cubic Maximum Independent Set}.}
This variant will help us to maintain planarity in our constructions.
\decisionprob{\textsc{Planar stretched cubic Maximum Independent Set}.}%
{A graph $G$ which is the 3-stretch of a cubic planar graph $H$
and a positive integer $K$.}%
{Does $G$ contain an independent set of size at least $K$?}

\begin{lemma}
 {\textsc{Planar stretched cubic Maximum Independent Set}} is \textup{NP}-complete.
\end{lemma}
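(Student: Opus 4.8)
The plan is to reduce from \textsc{Planar cubic Maximum Independent Set}, which is NP-complete by the preceding lemma. Given a cubic planar graph $H$ with edge set $E(H)$, let $G$ be its $3$-stretch, obtained by subdividing every edge of $H$ twice; thus each edge $uv\in E(H)$ is replaced by a path $u,a_{uv},b_{uv},v$, where $a_{uv}$ and $b_{uv}$ are new vertices. The graph $G$ is planar and is by construction the $3$-stretch of a cubic planar graph, so $(G,K')$ is a legitimate instance of \textsc{Planar stretched cubic Maximum Independent Set} for any $K'$. Writing $m=|E(H)|$ and letting $\alpha(\cdot)$ denote the size of a maximum independent set, the reduction maps $(H,K)$ to $(G,K+m)$, so everything rests on the identity $\alpha(G)=\alpha(H)+m$.

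For the lower bound $\alpha(G)\ge\alpha(H)+m$, I would take a maximum independent set $I$ of $H$ and add, for each edge $uv\in E(H)$, the internal vertex $b_{uv}$ if $u\in I$ and the internal vertex $a_{uv}$ otherwise. Since $I$ is independent, at most one endpoint of each edge lies in $I$, and one checks directly that the chosen internal vertices are mutually non-adjacent and that none of them is adjacent to a vertex of $I$ (the internal vertex picked on a path is never adjacent to the endpoint of that path that might lie in $I$). This yields an independent set of $G$ of size $|I|+m=\alpha(H)+m$.

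For the upper bound $\alpha(G)\le\alpha(H)+m$, let $J$ be any independent set of $G$ and put $I=J\cap V(H)$. On each path $u,a_{uv},b_{uv},v$ the two internal vertices are adjacent, so $J$ contains at most one of them; moreover if both $u$ and $v$ lie in $I$ then $J$ contains neither internal vertex of that path. Hence the number of internal vertices in $J$ is at most $m-|E(H[I])|$, where $H[I]$ is the subgraph of $H$ induced by $I$, and therefore $|J|\le |I| + m - |E(H[I])|$. Since deleting one endpoint of each edge of $H[I]$ leaves an independent set of $H$, we have $|I|-|E(H[I])|\le\alpha(H)$, so $|J|\le\alpha(H)+m$, as required. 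Putting the two bounds together gives $\alpha(G)=\alpha(H)+m$, and combined with the obvious membership in NP (an independent set of size $K+m$ is a polynomial-size certificate) this establishes the lemma. The only slightly delicate point is the upper bound: since $J\cap V(H)$ need not be independent in $H$, one cannot bound $|I|$ directly by $\alpha(H)$, and it is the counting of the internal vertices that are ``lost'' against the edges of $H[I]$ that makes the argument go through.
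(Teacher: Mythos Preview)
Your proof is correct and follows essentially the same route as the paper: reduce from \textsc{Planar cubic Maximum Independent Set} by showing $\alpha(G)=\alpha(H)+|E(H)|$, with the lower bound obtained by augmenting a maximum independent set of $H$ by one well-chosen internal vertex per subdivided edge, and the upper bound obtained by observing that edges of $H$ both of whose endpoints lie in the independent set contribute no internal vertices, then pruning those endpoints to recover an independent set of $H$. One cosmetic difference is that your upper bound argument works for an arbitrary independent set $J$ of $G$, whereas the paper's version appeals to maximality of $J$ to guarantee that each non-``IN--IN'' edge actually contains an internal vertex; your inequality formulation sidesteps that, which is slightly cleaner.
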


\begin{proof}
 
Let $m'$ be the number of edges of $H$.
We claim that the size of a maximum independent set of $G$ is equal to $m'$ plus the
size of a maximum independent set of $H$ .
 
First, suppose that $H$ has an independent set of size $k$.
We use this independent set to 
construct an independent set of size $m'+k$ in $G$: For every IN-OUT edge of $H$
(that is, for every edge $(u,v)$ of $H$ such that $u$ is in the independent set, and $v$ is out),
the corresponding configuration in $G$ can be IN-OUT-IN-OUT\null.
For every OUT-OUT edge of $H$ the corresponding configuration of $G$ can be OUT-IN-OUT-OUT.

Next, suppose that $G$ has an independent set of size $m'+k'$ for some $k'\geq 0$.
We  construct an independent set of size $k'$ in $H$. 
Consider an independent set in $G$ of size $m'+k'$ which contains as many  degree-2 vertices as possible.
Consider the configuration corresponding to an edge of $H$.
It cannot be IN-OUT-OUT-IN, because one of the IN vertices could be moved to a degree-2 vertex
without changing the size of the independent set.
Thus, this independent set induces an independent set of $H$.
Since at most $m'$ degree-2 vertices are contained in the independent set,
the induced independent set in $H$ is size at least $k'$.
\end{proof}

\subsection{Some global variables}
\label{sec:2ii}

In our proofs, we will work with an instance~$G$ and~$K$ of 
 {\textsc{Planar stretched cubic Maximum Independent Set}} where $G$ has $n$~vertices and $m$~edges.
 For now, in order to do the preliminary work, let's view $n$, $m$ and $K$ as parameters corresponding to the
 size of the instance that we'll work with. Using the global constants from Section~\ref{sec:A3},
 we define the following quantities.

\begin{align*}
\nu&=3n-m-2K.\\
\epsilon &= 
\frac{{(B^-)}^3}{3 } 
\chi^{\nu}
2^{-(n+2m+4)}Q^{-3n+m}.\\
L &= |q^3| \eta \epsilon/2.\\
R&= {(B^-)}^3/3\epsilon.\\
\delta &=   
\frac{L^n \chi^\nu}{16\, A^* 5^n  M^n 2^{2m} Q^{9n}}
\end{align*} 
 
Since $G$ is the 3-stretch of a cubic planar graph, we will have $m=\frac98n$.
We will also assume that $K\leq\frac58n$, since this is an easy upper
bound on the size of any independent set in~$G$.
We will rely on the following inequalities, which follow from these considerations as long as $n$ is sufficiently large.
\begin{align}
 L&\leq 1\label{ineq:L1}\\
 R&\geq 1 \label{ineq:R1}\\
 \nu&\geq\frac58n\geq 1\label{ineq:nu1}\\
 0&<\delta<\epsilon < \chi \leq 1 \label{ineq:deltaepsilonchi}\\
 \delta &\leq \epsilon \eta / (6 A^*)\label{ineq:deltaepsAstar}
 \end{align}

 \subsection{The gadget revisited}
\label{sec:2iii}

Suppose that quantity $a$ satisfies (\ref{eqa1}), (\ref{eqa2}) and (\ref{eqa3}) with $\rho=\epsilon$
and that $b$ satisfies (\ref{eqb1}) and (\ref{eqb2}) with $\hat\rho=\delta$. Define
$c$, $d$ and $e$  via equations (\ref{eqc}), (\ref{eq:d}) and (\ref{eq:e}) respectively. 
Note that $d\in[-\delta,\delta]$ and $e\in[\epsilon,2 \epsilon]$.
Note also that~(\ref{eq:10Feb1}) implies
\begin{equation}
|Z_{0|12}| \leq \delta \mu.
\label{Z0|12upper}
\end{equation}

Now,
$1\leq A^*$, and
the constraints on $a$
imply 
$|3a+3a^2| \leq A^*$
and $|9 a^3 + 3 a^4 - 3 a q| \leq A^*$. Thus, using~(\ref{eqa3}),
the absolute value of each of the right-most three terms in (\ref{eq:10Feb2}) is
at most $\delta A^* /\eta$ and by (\ref{ineq:deltaepsAstar}), this is at most $\epsilon/6$.
Thus,
\begin{equation}
\label{eq:sep}
L
\leq 
|q^3 a^2(a+3)|\, \frac{\epsilon}{2}
  \leq
|Z_{0|1|2}|\leq 
|q^3 a^2(a+3)|\, 3\epsilon.
\end{equation}
Also, from (\ref{eq:10Feb3}),
\begin{equation}
 |qa^{2}(a+3)|\, {(B^-)}^{3} \leq |Z_{012}|\leq  \tau.
\label{ineq:Z012}
\end{equation}
Finally, we combine these to see
\begin{equation}
\label{ref:fugacity}
\frac{q^2 |Z_{012}|}{|Z_{0|1|2}|}
\geq 
\frac{q^2 |qa^{2}(a+3)|\, {(B^-)}^{3}}
{ |q^3 a^2(a+3)|\, 3\epsilon}
= R.
\end{equation}
We will also use the following quantity, defined in terms of the $Y$-gadget.
$$\Psi = { 
\left|
\frac{ q^2 Z_{012}}{Z_{0|1|2}}
\right|
}^{K-1} R\,
{|Z_{0|1|2}|}^{n}  {|q|}^{-3n} 
\chi^\nu.$$

 \subsection{The Main lemma}

 We can now state, and prove, our main lemma.

\begin{lemma}\label{lem:bedrock}
    Suppose $q \notin[0,5]$
 and that
$\alpha_1\notin [-2,0]$, $\alpha_2\in(-2,0)$ and $\alpha_3<-1$.
    Then there is no FPRAS for
    $\multitutte(q;\alpha_{1},\alpha_{2},\alpha_3)$ unless $\mathrm{RP}=\mathrm{NP}$.
\end{lemma}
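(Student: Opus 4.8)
The plan is to reduce \textsc{Planar stretched cubic Maximum Independent Set}, shown NP-complete in the preceding lemma, to the problem of approximating $\multitutte(q;\alpha_1,\alpha_2,\alpha_3)$. Given an instance $(G,K)$ in which $G$ is the $3$-stretch of a cubic planar graph $H$ with $m'$ edges, I would build in polynomial time a planar graph $G'$ with an edge labelling $w':E(G')\to\{\alpha_1,\alpha_2,\alpha_3\}$ so that $Z(G';q,w')$ equals an explicitly computable nonzero quantity times a sum over subsets $S\subseteq V(H)$, in which the term indexed by $S$ has absolute value roughly $|C|\,|\lambda|^{|S|}$ when $S$ is independent in $H$ and is exponentially smaller when $S$ is not (an edge of $H$ with both ends in $S$ carrying a large penalty). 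Here $C$ and $\lambda$ are constants depending only on $q,\alpha_1,\alpha_2,\alpha_3$ and on an amplification parameter that may be taken to be any polynomially bounded integer, with $|\lambda|>1$. Taking that parameter large enough that $|\lambda|>2^{|V(H)|}$, the contribution of the maximum independent sets dominates the whole of the remaining sum, so $|Z(G';q,w')|$ lies between $|C|\,|\lambda|^{k}$ and $2^{|V(H)|}|C|\,|\lambda|^{k}$, where $k$ is the size of a maximum independent set of $H$; since these windows are disjoint for different $k$, an $e^{\pm\epsilon}$ relative approximation with $\epsilon$ inverse-polynomial determines $k$, and hence, since a maximum independent set of $G$ has size $m'+k$ by the second preliminary lemma, answers the decision problem. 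Amplifying the FPRAS's success probability by taking a median of polynomially many runs, the whole procedure solves an NP-complete problem in randomised polynomial time with bounded error, which is well known to imply $\RP=\NP$.

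The graph $G'$ is built in two layers. In the first I would use series and parallel composition — implementations in the sense of Equation~(\ref{eq:implement}), and in particular $k$-thickenings and $k$-stretches — to fabricate from the three available weights the edge gadgets that replace the three edges of each $3$-path of $G$, together with a ``blob'' gadget, glued at each vertex of $H$, whose effective component weight can be tuned. The hypotheses $\alpha_1\notin[-2,0]$, $\alpha_2\in(-2,0)$, $\alpha_3<-1$ and $q\notin[0,5]$ are present precisely to ensure that the composite weights required here exist and have the right signs and magnitudes: roughly, $\alpha_2$ (with $|1+\alpha_2|<1$) is used to build the blobs, $\alpha_3$ together with amplified versions of $\alpha_1$ supplies the large edge penalties, and $q\notin[0,5]$ keeps the series denominators $q+w_1+w_2$ away from the bad values where a composition degenerates. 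In the second layer I would assemble $G'$ by substituting these gadgets into $H$ as in Equation~(\ref{eq:shift}); expanding $Z(G';q,w')$ over the states of the blobs yields exactly the sum over $S$ described above, after pulling out the known product of the $Z_{s|t}$ factors of Equation~(\ref{eq:shift}).

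I expect the sign analysis to be the main obstacle. Since $q$ and two of the weights may be negative, $Z(G';q,w')$ is a priori a sum of exponentially many terms of both signs with heavy cancellation, so the blob and edge gadgets must be designed so that the dominant, maximum-independent-set term (i) survives rather than cancels, (ii) has a definite sign computable from $q,\alpha_1,\alpha_2,\alpha_3$, and (iii) dominates the \emph{absolute value} of the sum of all other terms, not merely their signed sum — point (iii) being what makes a multiplicative FPRAS guarantee usable, as near a zero of $Z$ it says nothing. All of this must be arranged with gadgets that are planar and have both terminals on the outer face, so that the substitutions keep $G'$ planar; this is the reason for starting from the $3$-stretched variant of independent set and for the exact numeric windows in the statement. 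Once gadgets with these properties are in hand, verifying the claimed form of $Z(G';q,w')$, the domination estimate and the extraction of $k$ is routine.
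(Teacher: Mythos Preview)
Your high-level architecture matches the paper's: reduce from the stretched planar cubic independent-set problem, replace each vertex of~$G$ by a gadget, and argue that only maximum independent sets contribute non-negligibly. But you correctly flag the sign analysis as the main obstacle and then do not resolve it, and that is precisely where all the work lies. The paper's gadget is not a two-terminal ``blob'' with a single effective weight~$\lambda$: it is a \emph{three}-terminal graph~$Y$ (a triangle with weight~$a$ on its edges and three pendant edges of weight~$b$), and the crucial tuning is to choose $a,b$ so that the ``doubleton'' contribution $Z_{0|12}$ is driven to~$O(\delta)$ while the ratio $q^2Z_{012}/Z_{0|1|2}$ is made enormous. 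This simultaneously kills all partitions $(S,D,T)$ with $D\neq\emptyset$, and only then does one face a residual partition function $Z(\Gamma_S;q,\beta)$ on a genuinely planar graph of maximum degree~$5$ with edge weights $\beta\approx-1$. The sign consistency of \emph{that} object across all size-$K$ independent sets~$S$ is established by external theorems (Woodall for $q>5$, Jackson--Sokal for $q<0$), and this is the actual reason for the hypothesis $q\notin[0,5]$: your explanation that it ``keeps the series denominators $q+w_1+w_2$ away from bad values'' is not the point. Without the three-terminal gadget, the vanishing of $Z_{0|12}$, and the appeal to these sign theorems, the dominance claim in your step~(iii) cannot be made to hold.

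A secondary issue: you propose to sum over $S\subseteq V(H)$ and then translate to~$G$ via the $m'+k$ relation, whereas the paper places a gadget at every vertex of~$G$ (not~$H$) and works directly with independent sets of~$G$. The reason is planarity and degree control: the $3$-stretch ensures every vertex has degree $\leq3$, so the three-terminal~$Y$ fits, and after the identifications the residual graph~$\Gamma_S$ has degree $\leq5$, which is exactly what Woodall's bound needs. Your version with gadgets at vertices of the cubic~$H$ would not obviously give this degree bound or preserve planarity in the same way.
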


\begin{proof}
Suppose $H$ is a cubic planar graph and $G$ is the 3-stretch of $H$.
Let $n=|V(G)|$ and $m=|E(G)|$
and note that Suppose $G$ and $K$ are inputs to
 {\textsc{Planar stretched cubic Maximum Independent Set}}.
Recall the definitions of the global variables from Section~\ref{sec:2ii}.
Our ultimate goal is to construct a planar instance $(G',w')$ of
$\multitutte(q;\alpha_{1},\allowbreak\alpha_{2},\alpha_3)$ such that 
a close approximation to $Z(G';q,w')$
enables us to determine whether $G$ has an independent set of size $K$.
To do this, we'll construct a weighted planar graph
$(\Ghat=(\Vhat,\Ehat),w)$
such that a close approximation to $Z(\Ghat;q,w)$  
enables us to determine whether $G$ has an independent set of size $K$,
 where
$w:\Ehat\to\{\beta,a,b\}$ 
and the edge-weight $a$ satisfies (\ref{eqa1}), (\ref{eqa2}) and (\ref{eqa3}) with $\rho=\epsilon$,
the edge-weight $b$ satisfies (\ref{eqb1}) and (\ref{eqb2}) with $\hat\rho=\delta$
and the edge-weight $\beta$ satisfies $|1+\beta|\leq \delta$.
Lemmas~\ref{lem:impa}, \ref{lem:impb} and \ref{lem:impbeta} show that such values can be implemented using
weights~$\alpha_1$, $\alpha_2$ and $\alpha_3$.
Thus, applying these implementations to the weight graph $\Ghat$ 
will give us $G'$ and $w'$ so that  $Z(\hatG;q,w)$ is an easily computable multiple of $Z(G',q,w')$, completing the proof.

Here is the construction of $\Ghat=(\Vhat,\Ehat)$. (See Figure~\ref{fig:graphpic}.)
First, fix any ordering on the vertices of $H$.
Next, let's set up some  useful notation for the graph $Y$, which we will use as a gadget.
A particular copy $Y^x$ of this gadget will have
vertex set  $V^x = \{\pr x0,\pr x1,\pr x2,\allowbreak
\pr x{\overline0},\pr x{\overline1},\pr x{\overline2}\}$
with vertices $\pr x0$, $\pr x1$ and $\pr x2$ arranged in clockwise order around
the outer face.
The edge set $E^x$ 
consists of three edges
$(\pr x0,\pr x{\overline0})$ , $(\pr x1,\pr x{\overline1})$, 
and $(\pr x2,\pr x{\overline2})$ of weight~$b$
and three edges 
$(\pr x{\overline0},\pr x{\overline1})$, $(\pr x{\overline1},
\pr x{\overline2})$ and $(\pr x{\overline2},\pr x{\overline0})$ of weight~$a$.
We will   construct a copy $Y^u$ of the $Y$-gadget for every vertex $u\in V(H)$.

Now, associate every edge $(u,v)$ of~$H$ with
two indices $i_{u,v}$ and $i_{v,u}$ in $\{0,1,2\}$ in such a way
that (a)~for every $u$, the three neighbors $v$ of $u$ get distinct indices $i_{u,v}$,
and (b)~the graph 
with vertex set $\bigcup_{u\in V(H)} V^u$ and
edge set 
$$
\bigcup_{u\in V(H)}E^u\,\,\cup\bigcup_{(u,v)\in E(H)} (\pr u{i_{u,v}},\pr v{i_{v,u}})
$$
is planar.
We will construct two copies, $Y^{uv}$ and  $Y^{vu}$, of the $Y$-gadget for every
edge $(u,v)$ of $H$. 
These correspond to the vertices of $G$ along the three-stretched edge $(u,v)$ of $H$.
Thus,
we have one $Y$ gadget for every vertex of $G$.

The vertex set $\Vhat$ is constructed 
from 
$\bigcup_{u\in V(H)} V^u \,\,\cup\,\, \bigcup_{(u,v)\in E(H)} (V^{uv} \cup V^{vu})$
by identifying some vertices,
In particular, for every
edge $(u,v)$ of $E(H)$ with $u<v$, 
identify  $\pr u{i_{u,v}}$ with $\pr{uv}0$. Also, identify $\pr{uv}2$ 
with $\pr{vu}1$.
Finally, identify $\pr{vu}0$ with $\pr v{i_{v,u}}$.
Note that 
$G$ has one vertex for each vertex of $H$ and two vertices for each 
edge of~$H$
so $n=|V(H)| + 2|E(H)|$. Also $m=3|E(H)|$.
So
$|\Vhat| = 6|V(H)| + 12|E(H)| - 3|E(H)| = 6n-m$.
 
Let 
$E = \bigcup_{u\in V(H)} E^u \,\,\cup\,\, 
\bigcup_{(u,v)\in E(H)} (E^{uv} \cup E^{vu})$. $E$ is all of the internal edges in the $Y$ gadgets.
So $|E| = 6|V(H)|+12|E(H)| = 6n$.
Let
$E'$ be the set of $m$ edges 
with weight~$\beta$ constructed as follows.
For each edge $(u,v)$ of $E(H)$ with $u<v$,
let $i= i_{u,v}-1\bmod 3$ and let $j=i_{v,u}-1\bmod 3$.
Add edges $(\pr ui,\pr{uv}1)$,
$(\pr{uv}1,\pr{vu}2)$ and $(\pr{vu}1,\pr vj)$ to $E'$. Let 
$\Ehat = E \cup E'$.
Note that $\hatG$ is planar.

\begin{figure}
\newcommand{\smoosh}[2]{\genfrac{}{}{0pt}{0}{#1}{#2}}
\centering{
\begin{tikzpicture}[fill=white,scale=0.7,
line width=0.5pt,inner sep=1pt,minimum size=2.5mm]
\pgfsetxvec{\pgfpoint{1.7cm}{0cm}}
\pgfsetyvec{\pgfpoint{0cm}{1.7cm}}
\path
(1.5,1) node [draw,fill,circle,minimum size=0.6cm](a){}
(1.5,-0.3) node [draw,fill,circle,minimum size=0.6cm](u1){\footnotesize$\pr u1$}
(1,2) node [draw,fill,circle,minimum size=0.6cm](b){}
(1,3.3) node [draw,fill,circle,minimum size=0.6cm](u2){\footnotesize$\pr u2$}
(2,2) node
[draw,fill,circle,minimum size=0.6cm](c){}
(3.5,2) node
[draw,fill,circle,minimum size=0.6cm](u0)
    {\footnotesize$\smoosh{\pr u0}{\pr {uv}0}$}
(5,2) node [draw,fill,circle,minimum size=0.6cm](d){}
(6,2) node [draw,fill,circle,minimum size=0.6cm](dd){}
(5.5,3) node [draw,fill,circle,minimum size=0.6cm](e){}
(5.5,4.3) node [draw,fill,circle,minimum size=0.6cm](uv1){\footnotesize$\pr{uv}1$}
(7.5,2) node [draw,fill,circle,minimum size=0.6cm](uv2)
   {\footnotesize$\smoosh{\pr{uv}2}{\pr{vu}1}$}
(9,2) node [draw,fill,circle,minimum size=0.6cm](f){}
(10,2) node [draw,fill,circle,minimum size=0.6cm](h){}
(9.5,3) node [draw,fill,circle,minimum size=0.6cm](g){}
(9.5,4.3) node [draw,fill,circle,minimum size=0.6cm](vu2){\footnotesize$\pr{vu}2$}
(11.5,2) node
[draw,fill,circle,minimum size=0.6cm](vu0)
   {\footnotesize$\smoosh{\pr{vu}0}{\pr v1}$}
(13,2) node [draw,fill,circle,minimum size=0.6cm](i){}
(13.5,4.3) node [draw,fill,circle,minimum size=0.6cm](v2){\footnotesize$\pr v2$}
(13.5,3) node [draw,fill,circle,minimum size=0.6cm](j){}
(14,2) node [draw,fill,circle,minimum size=0.6cm](k){}
(14,0.7) node [draw,fill,circle,minimum size=0.6cm](v0){\footnotesize$\pr v0$};
\draw[-] (b)--(c)--(u0)--(d)--(dd)--(uv2)--(f)--(h)--(vu0)--(i)--(k);
\draw[-] (b)--(a)--(c);
\draw[-] (a)--(u1);
\draw[-] (b)--(u2);
\draw[-] (d)--(e)--(dd);
\draw[-] (e)--(uv1);
\draw[style=dashed,color=red] (u2)--(uv1);
\draw[-] (f)--(g)--(h);
\draw[-] (g)--(vu2);
\draw[style=dashed,color=red] (uv1)--(vu2);
\draw[-] (i)--(j)--(k)--(v0);
\draw[-] (j)--(v2);
\draw[style=dashed,color=red] (v0)  .. controls (11,0.5) and (9,1) .. (uv2);
\end{tikzpicture}
}
\caption{The portion of $\hatG$ corresponding to edge $(u,v)$ of~$H$. 
In the picture, we assume $i_{u,v}=0$ and $i_{v,u}=1$. 
Edges of~$E$ are depicted as solid black lines and edges of~$E'$ 
are depicted as dashed red lines. Where two vertices have been identified, 
both the original labels have been displayed.}
\label{fig:graphpic}
\end{figure}

\def\part{\varPi}
The Tutte polynomial of $\hatG$ is given by
$$Z(\hatG;q,w) = \sum_{A \subseteq E} \sum_{B\subseteq E'} w(A)w(B) q^{\kappa(\hatV,A\cup B)},$$
where we have used the obvious fact $w(A\cup B)=w(A)w(B)$.  We would like to
go further and factor $\kappa(\hatV,A\cup B)$, in a similar way, but we
cannot do this directly because of the complex way that components 
in $(\hatV,A)$ and $(\hatV,B)$ may interact.  
To control this interaction, we partition sets $A\subseteq E$ according to the 
patterns of connectivities they induce within the various gadgets.
Specifically, let $\part=(S,D_0,D_1,D_2,T)$ be a labelled partition of 
$V(G)$ into five sets $S$, $D_0$, $D_1$, $D_2$ and $T$, some of which could be empty.
By ``labelled'' here, we mean that the five parts of the partition are distinguished by~$\part$.
In the following, it will help to think of $S$ as ``singleton'', 
$D$ as ``doubleton'' and $T$ as ``triple''.
Let $\mathcal{A}_\part$ denote the set of subsets $A\subseteq E$ such that the following
statements are true.
\begin{itemize}
\item For every $x\in S$, the vertices $\pr x0$, $\pr x1$ and $\pr x2$  
are in a single component of $(V^x,A\cap E^x)$.  Informally, all three vertices 
are connected {\it within\/} the gadget~$Y^x$.
\item For every $x\in D_i$, 
the vertex $\pr xi$ is in one component of $(V^x,A\cap E^x)$
and the other two vertices $\{\pr x0, \pr x1,\pr x2\}-\pr xi$ are in 
another.
\item For every $x\in T$, the vertices $\pr x0$, $\pr x1$ and $\pr x2$
are in three distinct components of $(V^x,A\cap E^x)$.
\end{itemize}

For a labelled partition $\part$ of $V(G)$ as above,
let $Z_\part$ be the contribution to $Z(\hatG;q,w)$ from edge sets
$A\in \mathcal{A}_\part$, specifically 
\begin{equation}\label{eq:Zpart}
Z_\part = \sum_{A \in \mathcal{A}_\part} 
   \sum_{B\subseteq E'} w(A)w(B) q^{\kappa(\hatV,A\cup B)}.
\end{equation}
It is clear that 
$Z(\hatG;q,w) = \sum_{\part} Z_\Pi$,
where $\part$ ranges over all labelled partitions $\part=(S,D_0,D_1,D_2,T)$
of $V(G)$ into five parts.
By constraining $A$ to come from a particular collection $\mathcal{A}_\part$
it now becomes possible to factor $\kappa(\hatV,A\cup B)$.  
To formalise this claim, let $V'= \hatV\setminus
\bigcup_{x\in V(G)} \{\pr x{\overline0},\pr x{\overline1},\pr x{\overline2}\}$, so
that $|V'| = |\hatV| - 3n = 3n - m$, and let $\varGamma$ denote the graph
$\varGamma = (V',E')$.   Suppose $\part=\{S,D_0,D_1,D_2,T)$ is some labelled 
partition, and denote by $\varGamma_\part$ the graph obtained from 
$\varGamma$ by identifying certain vertices.   Specifically, 
$\pr x0$, $\pr x1$ and $\pr x2$ are identified if $x\in S$,
$\pr x1$ and $\pr x2$ are identified if $x\in D_0$ (and symmetrically  
for $D_1$ and $D_2$), and none of the vertices are identified if $x\in T$.
 
With a view to factorising $\kappa(\hatV, A\cup B)$ into an $A$- and a $B$-part,
divide the connected components of $(\hatV, A\cup B)$ into two kinds:
those that contain no vertices in~$V'$ (and therefore are contained entirely within
a single $Y^x$),
and the others.  For convenience, let $D=D_0\cup D_1\cup D_2$.
The the number of connected components of the first kind is just
\begin{align}
&\sum_{x\in S}(\kappa(V^x,A\cap E^x)-1)+\sum_{x\in D}(\kappa(V^x,A\cap E^x)-2)
+\sum_{x\in T}(\kappa(V^x,A\cap E^x)-3)\notag\\
&\qquad = \sum_{x\in V}\kappa(V^x,A\cap E^x)-|S|-2|D|-3|T|.\label{eq:cc1st}
\end{align}
We argue that the connected components of the second kind are in 1-1 correspondence
with the connected components of $(V(\varGamma_\part),B)$.  Suppose two vertices $\pr xi$
and $\pr yj$ are connected by a path in $(\hatV,A\cup B)$;  then that same path can 
be traced out in $(V(\varGamma_\part),B)$ just by omitting the $A$-edges.
(Any pair of vertices joined by a 
sequence
of $A$-edges will have been 
identified in the construction of $\varGamma_\part$.)  Conversely, given a path in
$(V(\varGamma_\part),B)$, we can recover a path in $(\hatV,A\cup B)$ by interpolating
$A$-edges.  (We identify vertices in the construction of $\varGamma_\part$
only if they are in the same $A$-component.
Note that the ``recovered'' path may not be unique.
We conclude that the number of connected components of the second type
is $\kappa(V(\varGamma_\part),B)$.  Combining this with the count (\ref{eq:cc1st})
of connected components of the first type, we obtain 
$$
\kappa(\hatV,A\cup B)=\sum_{x\in V}\kappa(V^x,A\cap E^x)-|S|-2|D|-3|T|+\kappa(V(\varGamma_\part),B).
$$
Substituting for $\kappa(\hatV,A\cup B)$ in (\ref{eq:Zpart}) 
\begin{align*}
Z_\part &= \sum_{A \in \mathcal{A}_\part} 
   \sum_{B\subseteq E'} w(A)w(B) \bigg(\prod_{x\in V}q^{\kappa(V^x,A\cap E^x)}\bigg)\,
   q^{-|S|-2|D|-3|T|}\,q^{\kappa(V(\varGamma_\part),B)}\\
&= q^{-|S|-2|D|-3|T|}\bigg(\sum_{A \in \mathcal{A}_\part} 
   \prod_{x\in V}w(A\cap E^x)q^{\kappa(V^x,A\cap E^x)}\bigg)
   \bigg(\sum_{B\subseteq E'}w(B)q^{\kappa(V(\varGamma_\part),B)}\bigg).
\end{align*}
This immediately leads to the key identity
\begin{equation}\label{eq:key}
Z_{\part}
   =q^{-|S|-2|D|-3|T|}\,Z_{012}^{|S|}\,Z_{0|12}^{|D|}\,Z_{0|1|2}^{|T|}\,
   Z(\varGamma_{\part};q,\beta),
\end{equation}
where we recall that $\part=(S,D_0,D_1,D_2,T)$ and $D=D_0\cup D_1\cup D_2$.
Note that this identity captures the sought-for factorisation of $Z_{\part}$
into a part that is internal to the gadgets, 
and an part that is external, namely $Z(\varGamma_{\part};q,\beta)$.

With an eye on (\ref{eq:key}),
it is possible to give a short overview of the rest of the proof.
Recall that $Z(\hatG;q,w)$ is the sum over partitions $\part=(S,D_0,D_1,D_2,T)$
of~$Z_\part$.   If $D=D_0\cup D_1\cup D_2\not=\emptyset$, then $Z_\part$ is
negligible because $|Z_{0|12}|$ is tiny.  If $S$ is not an independent set in~$G$,
then $Z_\part$ is negligible because $\varGamma_{\part}$ has a loop,
and hence $Z(\varGamma_{\part};q,\beta)$ is tiny.  Finally, if $S$ is a 
maximum
independent set, then $Z_\part$ dominates because $|Z_{012}|$ is much larger than
$|Z_{0|1|2}|$.  So $Z(\hatG;q,w)$ is dominated by the contribution from maximum 
independent sets.

The rest of the proof is concerned with providing the estimates 
required to make the above proof sketch rigorous.
The number of vertices in $\varGamma_\part$ is
at most $3n-m$ and the number of edges is~$m$.  
Since $\max\{|\beta|,1\}\leq1+\delta\leq2$ and $|q|\leq Q$, we 
have the following general upper bound on 
$Z(\varGamma_{\part};q,\beta) = \sum_{B\subseteq E'} \beta^{|B|} q^{ \kappa(V(\varGamma_\part),B) }$: 
\begin{equation}\label{eq:ZGammaBd}
|Z(\varGamma_{\part};q,\beta)|\leq2^m(1+\delta)^mQ^{3n-m}\leq 2^{2m}Q^{3n-m}.
\end{equation}
If $\varGamma_\part$ has a loop we have the tighter bound
\begin{equation}\label{eq:withloop}
|Z(\varGamma_{\part};q,\beta)|\leq|1+\beta|\,2^{2m}Q^{3n-m}
   \leq \delta\,2^{2m}Q^{3n-m}.
\end{equation}
This comes about because the loop contributes $\beta$ when it is 
included and 1 when it is excluded, but the number of connected 
components is the same in both cases.
Recall the following general bounds on the other factors in~(\ref{eq:key})
which follow from (\ref{Z0|12upper}), (\ref{eq:sep}) and (\ref{ineq:Z012}):
\begin{align*}
|Z_{012}|,|Z_{0|12}|,|Z_{0|1|2}|&\leq\tau\\
|Z_{0|12}|&\leq \delta\mu<\tau\\
|q^{-|S|-2|D|-3|T|}|&\leq Q^{3n}.
\end{align*}

Following the proof sketch, first 
fix a partition~$\part$ in which $D$ is non-empty. 
Then, from (\ref{eq:key}), (\ref{eq:ZGammaBd}) and the bounds just noted,
$$
|Z_\part| \leq Q^{3n}\cdot \delta\mu\cdot\tau^{n-1}\cdot 2^{2m}Q^{3n-m}
\leq2^{2m}\delta\mu\tau^{n-1}Q^{6n}.
$$
So we get the following upper bound on contributions in which $D\not=\emptyset$:
\begin{equation}
\label{eq:D}
\sum_{\Pi:D\neq \emptyset}|Z_\Pi|\leq 5^n\cdot2^{2m}\delta\mu\tau^{n-1}Q^{6n} \leq \Psi/16.
\end{equation}
To see that the final inequality in (\ref{eq:D}) holds, first use (\ref{ref:fugacity})
to obtain $\Psi \geq R^K {|Z_{0|1|2}|}^{n}  {|q|}^{-3n} \chi^\nu$.
Then use  (\ref{ineq:R1}) 
($R^K\geq 1$) 
and  (\ref{eq:sep}) 
($|Z_{0|1|2}| \geq L$)
to see that this
is at 
least
$L^n {|q|}^{-3n} \chi^\nu$. Now 
divide the centre term in (\ref{eq:D}) by this lower bound for $\Psi$.
Plug in the definition of $\delta$  
and cancel the $\mu$ and $\tau$ in the numerator with $M$ in the denominator. 
The remaining terms cancel, and the result is at most~$1/16$. 

Next, fix a partition $\part$ in which $D$ is empty and $S$ is not an independent set of~$G$.
In this case, $\varGamma_\part$ has a loop, which arises from two adjacent gadgets
being contracted.  So from (\ref{eq:key}), (\ref{eq:withloop}) and the usual upper bounds,
$$
|Z_\part| \leq Q^{3n}\cdot\tau^n\cdot\delta\,2^{2m}Q^{3n-m}
\leq 2^{2m}\delta\tau^nQ^{6n}.
$$
So we get the following upper bound on contributions in which $S$ is not an independent set:
\begin{equation}
\label{eq:IS}
\sum_{\textstyle {\part:D=\emptyset,\atop S \text{ not independent}}}\!\!\!
|Z_\Pi|\leq 2^n\cdot2^{2m}\delta\tau^nQ^{6n} \leq \Psi/16.
\end{equation} 
The derivation of the final inequality in (\ref{eq:IS}) is essentially the same 
as the derivation of~(\ref{eq:D}). The
only difference is that a $5^n$ there has been replaced with an 
(even smaller) $2^n$ here. Also, a $\mu$ has
been replaced with a $\tau$ --- this still cancels against an $M$ as before.

Finally, fix a partition $\part$ in which $D$ is empty and $S$ is an independent set of~$G$
of size~$k$.  
From identity~(\ref{eq:key}), 
\begin{equation}
Z_{\part}=q^{-k-3(n-k)}\,Z_{012}^{k}\,Z_{0|1|2}^{n-k}\,
   Z(\varGamma_{\part};q,\beta)
=\Big(\frac{q^2Z_{012}}{Z_{0|1|2}}\Big)^kZ_{0|1|2}^n\,
   q^{-3n}\,Z(\varGamma_{\part};q,\beta).\label{eq:k}
\end{equation}
Thus, by (\ref{eq:ZGammaBd}),
$$
|Z_\part|\leq\Bigl|\frac{q^2Z_{012}}{Z_{0|1|2}}\Bigr|^k|Z_{0|1|2}|^n\,
   |q|^{-3n}\,2^{2m}Q^{3n-m}
$$
and by (\ref{ref:fugacity}) and (\ref{ineq:R1}),
\begin{equation}
\label{eq:small}
\sum_{\textstyle{\part:D=\emptyset, \, |S|<K,\atop 
S \text{ is independent}}}\!\!\!|Z_\part|
\leq 
2^n\cdot
{ 
\left|
\frac{ q^2 Z_{012}}{Z_{0|1|2}}
\right|
}^{K-1}
{|Z_{0|1|2}|}^{n}\, {|q|}^{-3n} \,
 2^{2m} 
Q^{3n-m}
\leq \Psi/16.
\end{equation}  
To see that the final inequality in (\ref{eq:small}) holds, plug in the definition of $\Psi$, and, inside that, plug in the
definition of~$R$, and inside that, plug in the definition of $\epsilon$. 
Everything cancels exactly.

Finally, we consider the situation $D=\emptyset$, $S$ is an independent set in~$G$,
and $|S|=K$.   Fix a partition $\part$ for which these conditions hold.
We are interested in obtaining a lower bound on~$|Z_\part|$.
Note that $|V(\varGamma_\Pi)|=\nu=3n-m-2K$.

\medskip\noindent{\it Case 1: $q>5$.\quad}
For  a lower bound 
on $|Z(\varGamma_\part;q,\beta)|$ and information about its sign, 
we use Woodall's Lemma~\ref{lem:Woodall}. Since  
$\chi = (q-5)/6$ and $\delta<\chi$, we have
$q > 5(1+\delta) + \chi$.
Furthermore, $|1+\beta|\leq \delta$.
Thus, Lemma~\ref{lem:Woodall} 
ensures 
that the sign of $Z(\varGamma_\part;q,\beta)$ is the
same for all $\part$ (it is always positive).
Also, we have shown
\begin{equation}
\label{eq:js}
|Z(\varGamma_\part;q,\beta)|\geq \chi^{\nu}.
\end{equation}
  
\medskip\noindent{\it Case 2: $q<0$.}  To determine the
same facts for $q<0$ we use~\cite[Theorem~4.1]{JacksonSokal}.
Note that $\varGamma_\part$ has no loops. 
Let $C_1,\ldots,C_\nu$ denote
the coefficients of $Z(\varGamma_\part;q,\beta)$, viewed as a polynomial in~$q$,
so $Z(\varGamma_\part;q,\beta) = \sum_{j=1}^{\nu} C_j q^j$.
Let $\pi_j=1$ if $C_j> 0$, $\pi_j=0$ if $C_j=0$ 
and $\pi_j=-1$ if $C_j<0$. 
Then
$$Z(\varGamma_\part;q,\beta) = {(-1)}^\nu
 \sum_{j=1}^{\nu}  {(-1)}^{\nu-j} \pi_j \,|C_j|\, {|q|}^j.$$
Jackson and Sokal~\cite[Theorem 4.1]{JacksonSokal} showed (assuming $\delta\leq 1$
which holds by inequality (\ref{ineq:deltaepsilonchi}))  that 
$(-1)^{\nu-j} \pi_j \geq 0$.
So 
$$Z(\varGamma_\part;q,\beta) = {(-1)}^\nu \sum_{j\in\{1,\ldots,\nu\}, \pi_j\neq 0}  |C_j| \,{|q|}^j.$$
Note that for $j=\nu$, $C_j=1$
so Equation~(\ref{eq:js}) holds and
the sign of 
$Z(\varGamma_\part;q,\beta)$ is the
same for all partitions~$\part$ in which $D=\emptyset$ and $S$ is an independent set of size~$K$
(the sign depends on the parity of~$\nu$).
This concludes Case~2.\footnote{ Establishing (\ref{eq:js}) is the main barrier to extending our result to $q\in[0,5]$.
The Tutte polynomial with $\beta$ close to $-1$ is similar to the chromatic polynomial.
Essentially, we are using the fact that this polynomial is non-zero with sign $(-1)^\nu$ when $q<0$ and is
positive when $q>5$. There are known to be many zeroes of   chromatic polynomials in between~$0$ and~$5$.}

Now, for a partition $\part$ in which $D$ is empty and $S$ is an independent set of~$G$
of size~$K$, Equations~(\ref{eq:k}),
(\ref{ref:fugacity}) and (\ref{eq:js})
give 
\begin{equation}
\label{eq:K}
|Z_\part|  \geq
  \Psi.
\end{equation}
Since the sign of $Z(\varGamma_\part;q,\beta)$ is the same for all 
$\part$ under consideration,  
it is apparent from~(\ref{eq:k}) that the sign of $Z_\part$ 
depends only on the sign of~$q$, the sign of $Z_{012}$, the
sign of $Z_{0|1|2}$ and the parity of $K$ and~$n$. It does not depend 
on the set~$S$.   

So if $G$ has $N>0$ independent sets of size~$K$
then by Equations~(\ref{eq:D}), (\ref{eq:IS}), 
(\ref{eq:small}) and~(\ref{eq:K}),
$|Z(\Ghat;q,w)| \geq N \Psi - 3\Psi/16 \geq 3\Psi/4$.
On the other hand, if $G$ has no independent sets of size~$K$
then the same equations give
$|Z(\Ghat;q,w)| \leq 3 \Psi/16 < \Psi/4$.
So if we could approximate $Z(\Ghat;q,w)$ within a factor of $\tfrac32$
then we could determine whether or not $G$ has an independent set of size~$K$.
\end{proof}

\subsection{The  main result}

\begin{theorem}\label{thm:shift}
Suppose $(x,y)\in \mathbb{Q}^2$ satisfies
$q=(x-1)(y-1) \notin[0,5]$  .  Suppose also that
it is possible to shift the point
$(x,y)$ to a point $(x_1,y_1)$ with $y_1\notin[-1,1]$
and to a point $(x_2,y_2)$ with $y_2\in(-1,1)$ and to 
a point $(x_3,y_3)$ with $y_3<0$.
Then there is no FPRAS for
$\tutte(x,y)$
unless
$\mathrm{RP}=\mathrm{NP}$.
\end{theorem}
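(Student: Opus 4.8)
The plan is to deduce the theorem from Lemma~\ref{lem:bedrock} by a straightforward gadget reduction, so that essentially all of the real work has already been done. First I would translate the three shift hypotheses into the weight language of Lemma~\ref{lem:bedrock}. Put $\alpha=y-1$ and $\alpha_i=y_i-1$ for $i\in\{1,2,3\}$. Then the condition $y_1\notin[-1,1]$ is exactly $\alpha_1\notin[-2,0]$, the condition $y_2\in(-1,1)$ is exactly $\alpha_2\in(-2,0)$, and the condition $y_3<0$ is exactly $\alpha_3<-1$; together with the hypothesis $q\notin[0,5]$, these are precisely the hypotheses of Lemma~\ref{lem:bedrock}. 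Hence there is no FPRAS for $\multitutte(q;\alpha_1,\alpha_2,\alpha_3)$ unless $\RP=\NP$, and it remains only to give a polynomial-time, approximation-preserving reduction from $\multitutte(q;\alpha_1,\alpha_2,\alpha_3)$ to $\tutte(x,y)$.

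For the reduction, recall that a shift from $(x,y)$ to $(x_i,y_i)$ is, by definition, a fixed planar graph $\Upsilon_i$ with distinguished vertices $s_i,t_i$ on its outer face, together with a weight function $\hatw_i\colon E(\Upsilon_i)\to\{\alpha\}$ satisfying~(\ref{eq:implement}), i.e.\ $\alpha_i=qZ_{s_it_i}(\Upsilon_i)/Z_{s_i|t_i}(\Upsilon_i)$; crucially $\Upsilon_i$ depends only on $(x,y)$ and $(x_i,y_i)$, not on the instance. Given a planar instance $(G,w)$ of $\multitutte(q;\alpha_1,\alpha_2,\alpha_3)$, I would form $G'$ by replacing each edge $f$ of $G$, which carries weight $\alpha_{i(f)}$ for some $i(f)\in\{1,2,3\}$, with a fresh copy of $\Upsilon_{i(f)}$ exactly as in the ``implementing new edge weights'' discussion, and then give every edge of $G'$ the constant weight $\alpha=y-1$ to obtain a weight function $w'$. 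Because each $\Upsilon_i$ is planar with $s_i,t_i$ on its outer face, the substitutions keep the graph planar, and because each $\Upsilon_i$ has size bounded by a constant depending only on $(x,y)$, the graph $G'$ has size polynomial in that of $G$. Iterating~(\ref{eq:shift}) over the edges of $G$ yields $Z(G';q,w')=C\cdot Z(G;q,w)$, where $C=q^{-2|E(G)|}\prod_{f\in E(G)}Z_{s_{i(f)}|t_{i(f)}}(\Upsilon_{i(f)})$. Here $q\neq0$ since $q\notin[0,5]$, and each $Z_{s_i|t_i}(\Upsilon_i)\neq0$ (as~(\ref{eq:implement}) requires), so $C$ is a nonzero rational, computable in polynomial time since each factor $Z_{s_i|t_i}(\Upsilon_i)$ is a fixed rational.

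To finish, note that $q\neq0$ forces $x\neq1$ and $y\neq1$, so~(\ref{eq:rcequiv}) applies and gives $T(G';x,y)=(y-1)^{-|V(G')|}(x-1)^{-\kappa(E(G'))}Z(G';q,w')$. Combining this with the previous identity, $T(G';x,y)$ equals $Z(G;q,w)$ times a nonzero rational constant that is computable in polynomial time from $G'$, and hence from $G$. Multiplying an estimate by a known nonzero constant does not change its relative error, so an FPRAS for $\tutte(x,y)$, run on the planar graph $G'$ with accuracy parameter $\epsilon$, would give an FPRAS for $\multitutte(q;\alpha_1,\alpha_2,\alpha_3)$, contradicting Lemma~\ref{lem:bedrock} unless $\RP=\NP$. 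Since $G'$ is planar, the very same argument in fact rules out an FPRAS for $\planartutte(x,y)$ as well, which is the form in which the result will be used later.

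I do not expect any serious obstacle in this argument: Lemma~\ref{lem:bedrock} already carries all the delicate content (the $Y$-gadget, the sign and magnitude estimates coming from the results of Woodall and of Jackson and Sokal, and the hyperbola-moving constructions of the two cases). The only points needing care here are checking that replacing edges by the shift gadgets preserves planarity --- which holds precisely because each gadget has its terminals on its outer face --- and tracking the easily computed multiplicative constants in~(\ref{eq:shift}) and~(\ref{eq:rcequiv}) so that relative error is genuinely preserved through the reduction.
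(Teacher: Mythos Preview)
Your proposal is correct and follows essentially the same argument as the paper: translate the hypotheses on $y_1,y_2,y_3$ into the conditions on $\alpha_1,\alpha_2,\alpha_3$ needed by Lemma~\ref{lem:bedrock}, replace each $\alpha_i$-edge of a $\multitutte$ instance by the corresponding shift gadget $\Upsilon_i$ (whose edges all carry weight $\alpha=y-1$), and then combine (\ref{eq:shift}) with (\ref{eq:rcequiv}) to see that an FPRAS for $\tutte(x,y)$ yields one for $\multitutte(q;\alpha_1,\alpha_2,\alpha_3)$. Your write-up is in fact slightly more careful than the paper's, explicitly checking planarity preservation, nonvanishing of the multiplicative constant, and the extension to $\planartutte(x,y)$.
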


\begin{proof}
This follows easily from Lemma~\ref{lem:bedrock} and is similar to the proof of
\cite[Theorem 2]{tuttepaper}. 
For completeness, here is a proof.

Let $\alpha=y-1$ and $\alpha_i = y_i - 1$.
Let $\Upsilon_i$ be a planar graph 
with distinguished vertices $s_i$ and $t_i$
that shifts $(x,y)$ to $(x_i,y_i)$.
Note that $\Upsilon_i$
shifts $(q,\alpha)$ to $(q,\alpha_i)$.

Suppose $(G,w)$ is an instance of $\multitutte(q;\alpha_{1},\alpha_{2},\alpha_3)$
and note that $\alpha_1$, $\alpha_2$ and~$\alpha_3$ satisfy the conditions
of Lemma~\ref{lem:bedrock}. Suppose that $G$ has $m_i$ edges with weight $\alpha_i$.
Denote by $\hatG$ the graph derived from $G$ by applying the above shifts --- replacing each
edge with weight $\alpha_i$ with a copy of $\Upsilon_i$ (using the distinguished vertices $s_i$ and $t_i$).
Let $\hatw$ be the constant weight function which assigns weight $\alpha$ to
every edge in $\hatG$.
Then by Equation~(\ref{eq:shift}),
 $$Z(\hatG;q,\hatw)
 = 
{\left(\frac{Z_{s|t}(\Upsilon_1)}{q^2}\right)}^{m_1}
{\left(\frac{Z_{s|t}(\Upsilon_2)}{q^2}\right)}^{m_2}
{\left(\frac{Z_{s|t}(\Upsilon_3)}{q^2}\right)}^{m_3}
Z(G;q,w),$$
so by Equation~(\ref{eq:rcequiv}),
$$
(y-1)^n (x-1)^{\kappa} T(\hatG;x,y) =
{\left(\frac{Z_{s|t}(\Upsilon_1)}{q^2}\right)}^{m_1}
{\left(\frac{Z_{s|t}(\Upsilon_2)}{q^2}\right)}^{m_2}
{\left(\frac{Z_{s|t}(\Upsilon_3)}{q^2}\right)}^{m_3}
Z(G;q,w),$$ 
where 
$n$ is the number of vertices in $\hatG$, and $\kappa$ is the number of
connected components in $\hatG$.
Note that $Z_{s|t}(\Upsilon_i)\neq 0$ since
$q Z_{st}(\Upsilon_i)/Z_{s|t}(\Upsilon_i)=\alpha_i$.
Thus an FPRAS for $\tutte(x,y)$
would yield an FPRAS
for the problem
$\multitutte(q;\alpha_{1},\alpha_{2},\alpha_3)$, contrary to Lemma~\ref{lem:bedrock}.
\end{proof}

 The following corollary  
identifies regions where approximating $\tutte(x,y)$ is intractable.
It is illustrated in Figure~\ref{fig:two}.

\begin{corollary}\label{cor:x<-1}
Suppose $\mathrm{RP}\not=\mathrm{NP}$. Then
there is no FPRAS for $\planartutte(x,y)$
when $(x,y)$ is a point 
in the following regions, where $q$ denotes
$(x-1)(y-1)$:
\begin{enumerate}
\item $x<0$, $y<0$ and $q>5$;
\item $x>1$, $y<-1$;
\item $y>1$, $x<-1$.
\end{enumerate}
\end{corollary}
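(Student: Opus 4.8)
The plan is to derive Corollary~\ref{cor:x<-1} from Theorem~\ref{thm:shift} by showing that in each of the three named regions one can shift the point $(x,y)$ to three points meeting the hypotheses of the theorem: a point $(x_1,y_1)$ with $y_1\not\in[-1,1]$, a point $(x_2,y_2)$ with $y_2\in(-1,1)$, and a point $(x_3,y_3)$ with $y_3<0$. Recall that $k$-thickening powers the $y$-coordinate ($y\mapsto y^k$) and $k$-stretching powers the $x$-coordinate ($x\mapsto x^k$), and both keep us on the hyperbola $H_q$; also series/parallel compositions of these give additional reach. Since $q\not\in[0,5]$ is already part of the hypothesis in each case, the only work is to exhibit these three shifts using the geometry of $H_q$.

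First I would handle case~2, $x>1$, $y<-1$ (so $q=(x-1)(y-1)<0$, consistent with $q\not\in[0,5]$). Here $y<-1$ already gives a point with $y$-coordinate outside $[-1,1]$ and (trivially) negative, so the point itself serves as both $(x_1,y_1)$ and $(x_3,y_3)$. The only nontrivial task is to produce a point with $y$-coordinate strictly inside $(-1,1)$: since $x>1$, stretching drives $x\mapsto x^k\to\infty$, so $y=q/(x-1)+1\to 1$ from below (as $q<0$), and for suitable $k$ we land with $y\in(-1,1)$; one checks $y$ stays above $-1$ by choosing $k$ large enough, or uses a $2$-stretch to guarantee $x^2>1$ and then iterates. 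Case~3, $y>1$, $x<-1$, is the mirror image under the duality $x\leftrightarrow y$: now $y>1$ is the $y\not\in[-1,1]$ point, $x<-1$ combined with a $2$-thickening $y\mapsto y^2>1$ keeps us with $y>1$ but we instead need $y_3<0$ and $y_2\in(-1,1)$ — these come from thickening, since powering a $y>1$ won't help, so the relevant move is series composition on the $x<-1$ side to bring $x$ to large magnitude, forcing $y\to 1$; alternatively observe $x<-1$ means $2$-stretching gives $x^2>1$, reducing to case~2's configuration. I would unify cases 2 and 3 by remarking that they are exchanged by the standard Tutte duality (swapping $x$ and $y$, which swaps thickening and stretching), so it suffices to treat one of them.

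For case~1, $x<0$, $y<0$, $q>5$, all of $x,y$ are negative and $q>5$. A point with $y<0$ is given for free, so $(x_3,y_3)=(x,y)$. For the $y\not\in[-1,1]$ point: if $y\le -1$ we are done immediately; if $y\in(-1,0)$, then $2$-thickening gives $y^2\in(0,1)$, which is the wrong side, but stretching keeps $x<0$ and, since $x<0$, a $2$-stretch gives $x^2>0$; more usefully, since $q>5$ and $x<0$ we have $y=q/(x-1)+1$, and as $x\to-\infty$ under stretching $y\to 1$ from below, while $k$-thickening of a $y\in(-1,0)$ oscillates in sign; the cleanest route is to note that on $H_q$ with $q>5$ the branch in $x<0$ has $y\in(1-q, ?)$ and we can stretch to reach a $y<-1$ point (when $x\in(-\infty,0)$, $y$ ranges over $(1-q,1)$, which includes values below $-1$ precisely because $q>5>2$). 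Finally the $y\in(-1,1)$ point again comes from stretching $x$ toward $-\infty$ so that $y\to 1^-$. The main obstacle I anticipate is bookkeeping the exact intervals swept out by $k$-thickening and $k$-stretching for each sign configuration of $(x,y)$ and each regime of $q$, making sure the three target intervals are all actually reached (and not merely approached) by integer powers or finite compositions; this is elementary but requires care, exactly the kind of interval-chasing already carried out in Sections~\ref{sec:shiftbigq} and~\ref{sec:shiftnegq}, which can be invoked or imitated. Once the three shifts are in hand in each region, Theorem~\ref{thm:shift} applies verbatim and the corollary follows.
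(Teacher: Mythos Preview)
Your overall plan---apply Theorem~\ref{thm:shift} by exhibiting three shifts, and use planar duality $T(G;x,y)=T(G^*;y,x)$ to halve the casework---is exactly the paper's approach, and your treatment of regions~2 and~3 matches the paper's proof essentially verbatim.

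The gap is in region~1. You assert that the point with $y_2\in(-1,1)$ ``comes from stretching $x$ toward $-\infty$ so that $y\to 1^-$,'' but this only works when $|x|>1$. In the subcase $-1\le x<0$ (which, since $q>5$, forces $y\le -3/2$), $k$-stretching keeps $|x^k|\le 1$, so the resulting $y'=1+q/(x^k-1)$ satisfies $y'\le 1-q/2<-3/2$ for all~$k$; you never land in $(-1,1)$. Thickening alone does not help either, since $|y|>1$ gives $|y^k|>1$. So simple stretches and thickenings from $(x,y)$ do not suffice here, and your explicit argument breaks.

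The paper handles this subcase by first invoking the interval-targeting machinery of Section~\ref{sec:shiftbigq} (parallel compositions of iterated stretches, using both $y_1'$ and $y_3'$) to shift to a point $(x'',y'')$ with $y''\in[-5/4,-9/8]$; since $q>5$ this forces $x''<-1$ as well, reducing to the clean subcase $x<-1$, $y<-1$ where a large odd stretch gives $y_2\in(-1,1)$. The dual subcase $-1\le y<0$ is then covered by planar duality. Your final hedge that Sections~\ref{sec:shiftbigq}--\ref{sec:shiftnegq} ``can be invoked or imitated'' is correct in spirit, but you should recognise that this subcase genuinely requires that machinery (or, alternatively, a duality argument within region~1 that you did not spell out), not just a single stretch.
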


\begin{proof} 

We will show that for each point $(x,y)$ in the following regions,
we can shift to points $(x_1,y_1)$, $(x_2,y_2)$ and $(x_3,y_3)$
satisfying the conditions of Theorem~\ref{thm:shift}.

For the remaining cases, we use
the fact that, when $G$ is a planar graph and $G^*$ is any plane dual of $G$,
$T(G;x,y) = T(G^*;y,x)$ \cite[\S3.3.7] {Welsh93} (so the fact that there is no FPRAS at $(x,y)$
implies that there is no FPRAS at $(y,x)$ and vice-versa).
The regions that we consider are as follows. 

\begin{enumerate}
\item {\it $x<-1$, $y<-1$ and $q>5$}:\quad
We can take $(x_3,y_3)$ and $(x_1,y_1)$ to be $(x,y)$ since $y<-1$.
We can
realise $(x_2,y_2)$ using a large, odd, $k$-stretch
so $y_2 = q/(x^k-1)+1$ which is in the range $(-1,1)$.

\item {\it $-1\leq x < 0$, $y\leq -3/2$ and $q>5$}:\quad
Note that the condition $y\leq -3/2$ is implied by the
bounds on $x$ and $q$.
As above, we can take $(x_3,y_3)$ and $(x_1,y_1)$ to be $(x,y)$ since $y<-1$.
Next, realise $(x',y')$  using a 2-thickening   so
$$x' = \frac{q}{y^2-1}+1>1.$$
Choose $j$ so that ${x'}^j>q/|x|$. Realise $(x_2,y_2)$ by
$j$-stretching $x'$ and combining this in series with $x$ so $x_2 = {x'}^j x < -q$.
Since $|x_2-1|\geq q$, we have $|y_2-1|\leq 1$. Also $y_2-1 <0$ Thus $y_2 \in (0,1)$.

\item {\it $x>1$, $y<-1$}:\quad
Note that $q<0$. Again, we can take 
both $(x_1,y_1)$ and $(x_3,y_3)$ to be the point
$(x,y)$ since that gives $y_1=y_3<-1$. 
We get to $(x_2,y_2)$ by a $j$-stretch, for sufficiently large $j$.
This gives 
$$y_2 = \frac{q}{x^j-1}+1\in(-1,1).$$

\end{enumerate}
\end{proof}

\subsection{The deferred proofs from Section  \ref{sec:A4}}
\label{sub:deferred}

In this section we provide the proofs of Lemmas~\ref{lem:impa}, \ref{lem:impb} and \ref{lem:impbeta}.
We start with some technical lemmas which show that
we can use $\alpha_1$, $\alpha_2$ and $\alpha_3$ to implement a very close approximation to any target weight~$T^*$, provided 
$T^*\notin[-2,0]$.

\begin{lemma}  \label{lem:shiftbigqbigT}
Suppose $q >5$
 and that
$\alpha_1\notin [-2,0]$, $\alpha_2\in(-2,0)$ and $\alpha_3<-1$.
Suppose that $T^-$ and $T^+$ satisfy 
$0<T^-\leq T^+$.
Given a target edge-weight $T^*\in[T^-,T^+]$ and a
positive value~$\pi$ which is sufficiently small with respect to~$q$, $\alpha_1$, $\alpha_2$,
$\alpha_3$, $T^-$, and $T^+$,
there is a planar graph $\Upsilon$ (depending on $T^*$ and $\pi$) and a weight 
function $\hatw:E(\Upsilon)\rightarrow \{\alpha_1,\alpha_2,\alpha_3\}$ 
that implements a weight $w^*$
with 
$T^* - \pi \leq w^* \leq T^*$.
The size of $\Upsilon$ is at most a polynomial in $\log(\pi^{-1})$.
(This upper bound on the size of $\Upsilon$ does not depend on~$T^*$, though it
does depend on the fixed bounds~$T^-$ and~$T^+$.)
\end{lemma}

\begin{proof}

The weights that we have available for our implementations are $\alpha_1$, $\alpha_2$  and $\alpha_3$
and the target edge weight is~$T^*$.
It will be useful to use $(x,y)$ coordinates as well as $(q,\alpha)$ coordinates 
since series compositions power~$x$ and parallel compositions power~$y$.
Recall that the relationship between the two coordinate systems is given by $q=(x-1)(y-1)$ and
$\alpha=y-1$.
Thus, 
we define   
\begin{itemize}
\item $y'_i = 1+\alpha_i$ for $i\in\{1,2,3\}$,
\item $T = 1+ T^*$, and
\item  $x'_i=q/(y'_i-1)+1$ for $i\in\{1,2,3\}$.
\end{itemize}
(The primes are just there because we use the notation $y_i$ for something else below.)
 
We will show how to 
use the values $y'_1$ , $y'_2$ and $y'_3$ to implement an edge-weight whose $y$-coordinate is between
$T-\pi$ and $\pi$.
(In fact, we won't use $y'_2$ in the proof of this lemma, but we will use it in the proof of some of the related lemmas.)
To do this efficiently (keeping the size of $\Upsilon$ at most a polynomial in 
$\log(\pi^{-1})$) we
need to be  somewhat careful about decomposing~$T$.  
Let $(x_1,y_1)$ be the point on the hyperbola $(x_1-1)(y_1-1)=q$ given by $y_1 = {y'_1}^2$.
Note that $y_1>1$ so $x_1>1$ and that we we can implement $(x_1,y_1)$ by $2$-thickening from $(x'_1,y'_1)$.

Let
$$y_j = \frac{q}
{
{ x_1
}^j
-1
}
+1.$$
Let $x_j$ be the corresponding  value so that $(x_j-1)(y_j-1)=q$.
Note that, for every integer $j\geq 1$,
we can implement $(x_j,y_j)$ by   $j$-stretching from $(x_1,y_1)$.
Also, since $x_1>1$, we have $y_j>1$ and $y_j>y_{j+1}$.

Now, for every integer $j\geq 1$, we recursively define a quantity
$d_j$ in terms of the values of $d_1,\ldots,d_{j-1}$.
In particular, we first find the largest power of~$y_1$ not exceeding~$T$,
and divide~$T$ by this power to obtain~$d_1$;
then we divide~$d_1$ by the largest power of~$y_2$ to obtain~$d_2$, and so on.
Formally,
$$d_j = \left\lfloor
\frac{\log(T 
\prod_{\ell=1}^{j-1}{y_\ell}^{-d_\ell}
)
}
{\log(y_j)}
\right\rfloor.$$

Let $y''_m = \prod_{\ell=1}^{m} y_\ell^{d_\ell}$. Note that $T/y_m \leq y''_m \leq T$.
Also, since $d_j$ is a non-negative integer,
we can implement $y''_m$ 
with a graph $\Upsilon_m$
by 
$d_\ell$-thickening $y_\ell$ (for $\ell\in\{1,\ldots,m\}$) and then combining these in parallel.

Let 
$$m = \left\lceil 
\frac
{\log( q T /\pi+1)}{\log(
 x_1)}\right\rceil.$$
Note that $y_m \leq 1+\pi/T \leq 1/(1-\pi/T)$,
so $1/y_m \geq 1-\pi/T$.
Let $y=y''_m$ and let $\Upsilon=\Upsilon_m$.
Note that $T-\pi \leq y \leq T$, as required.
 
To see that this implementation is feasible,  note that $m$ 
is not too large. In particular, 
for fixed $q$, $y'_1$, $T^-$ and $T^+$,  
$m$
is bounded from above by a polynomial
in the logarithm of $\pi^{-1}$.
To finish, we must show that the same is true of $d_1,\ldots,d_m$.
Here, the key observation is
that $y_j^{d_j} \leq T/{y''_{j-1}} \leq y_{j-1}$,
so
$d_j \leq \log(y_{j-1})/\log(y_j)$.
Then for $y_j\leq 5/4$, say,
we have $\tfrac34 (y_j-1) \leq \log(y_j) \leq y_j-1$ which suffices.
\end{proof}

 \begin{lemma}  \label{lem:shiftsmallqbigT}
Suppose $q <0$
 and that
$\alpha_1\notin [-2,0]$, $\alpha_2\in(-2,0)$ and $\alpha_3<-1$.
Suppose that $T^-$ and $T^+$ satisfy 
$0<T^-\leq T^+$.
Given a target edge-weight $T^*\in[T^-,T^+]$ and a
positive value~$\pi$ which is sufficiently small with respect to~$q$, $\alpha_1$, $\alpha_2$,
$\alpha_3$, $T^-$, and $T^+$,
there is a planar graph $\Upsilon$ (depending on $T^*$ and $\pi$) and a weight 
function $\hatw:E(\Upsilon)\rightarrow \{\alpha_1,\alpha_2,\alpha_3\}$ 
that implements a weight $w^*$
with 
$T^* - \pi \leq w^* \leq T^*$.
The size of $\Upsilon$ is at most a polynomial in $\log(\pi^{-1})$.
\end{lemma}

\begin{proof}
The situation is the same as that of Lemma~\ref{lem:shiftbigqbigT} except that $q<0$.
The proof
is very similar to the proof of Lemma~\ref{lem:shiftbigqbigT}, and we use the notation from that proof.
Here, we start by implementing a point $(x_1,y_1)$ with
$x_1<-1$. 
Then we just use odd values of $j$ 
and it suffices to take
$$m = \left\lceil 
\frac
{\log( |q| T /\pi)}{\log(
 |x_1|)}\right\rceil,$$
 and to follow the proof of Lemma~\ref{lem:shiftbigqbigT}.

The point $(x_1,y_1)$ is reached as follows.  
If ${y'_1}^2 < 1+|q|/2$ then
we can take $(x_1,y_1)= 
\big(q/({y'_1}^2-1)+1,{y'_1}^2\big)$
since $x_1<-1$. 
Otherwise, proceed as follows.
Let $$\xi=\frac{|q|}{2}\,\frac{1}{1+|q|/2}.$$
Choose a positive integer $j$ so that
$$-\xi < \frac{q}{\big(
q/(y'_2-1)+1
\big)^j-1} < 0.$$
There is such a $j$ since $y'_2\in(-1,1)$.
Now let  $(\hatx,\haty)$ be
the $j$-stretch of $(x'_2,y'_2)$ so  $1-\xi < \haty < 1$.
Now let
$$k = 1 + \left\lfloor
\frac{\log\left(
(1+|q|/2)/{y'_1}^2
\right)}
{\log(\haty)} 
\right\rfloor.$$
Note that $k$ is a positive integer since ${y'_1}^2 \geq 1+|q|/2$.
Let 
$(x_1,y_1)$ be the parallel composition of 
$\big(q/({y'_1}^2-1)+1,
{y'_1}^2\big)$ with the $k$-thickening of $(\hatx,\haty)$.
Thus, $y_1 = {\haty}^k {y'_1}^2$.
Note that $1< \haty (1+|q|/2) \leq
y_1 < 1+|q|/2$  so
$x_1 < -1$.
\end{proof}

\begin{lemma}  \label{lem:shiftsmallT}
Suppose $q \notin[0,5]$
 and that
$\alpha_1\notin [-2,0]$, $\alpha_2\in(-2,0)$ and $\alpha_3<-1$.
Suppose that $T^-$ and $T^+$ satisfy 
$2< T^-\leq T^+$.
Given a target edge-weight $T^*$ with $-T^*\in[T^-,T^+]$ and a
positive value~$\pi$ which is sufficiently small with respect to~$q$, $\alpha_1$, $\alpha_2$,
$\alpha_3$, $T^-$, and $T^+$,
there is a planar graph $\Upsilon$ (depending on $T^*$ and $\pi$) and a weight 
function $\hatw:E(\Upsilon)\rightarrow \{\alpha_1,\alpha_2,\alpha_3\}$ 
that implements a weight $w^*$
with 
$T^* \leq w^* \leq T^*+\pi$.
The size of $\Upsilon$ is at most a polynomial in $\log(\pi^{-1})$.
\end{lemma}

\begin{proof}
Once again, we use the notation from the proof of Lemma~\ref{lem:shiftbigqbigT}. The
situation is the same as 
that of Lemmas~\ref{lem:shiftbigqbigT} and~\ref{lem:shiftsmallqbigT} except that  
the target edge weight is negative (in fact, it is less than~$-2$). 
 
Choose an even positive integer~$j$ so that $y_3' {y'_2}^j \in (-1,0)$
and let $\hat y = -y_3' {y'_2}^j$ (so $\hat y \in (0,1)$).
Recall that $T = T^*+1$.
Now let 
$U^* = \frac{-T}{\hat y}-1$.
Note that $$U^* \in \left[\frac{T^--1}{\hat y} -1,\frac{T^+-1}{\hat y}-1\right]$$ and
that the lower bound $(T^--1)/\hat y -1$
is positive.
Using Lemma~\ref{lem:shiftbigqbigT} or~\ref{lem:shiftsmallqbigT} (whichever is appropriate, depending on the sign of~$q$)
with target edge weight $U^*$ and error value~$\pi/\hat y$,
implement an edge-weight whose $y$-coordinate  $y'$ 
satisfies
$$\frac{-T}{\hat y} - \frac{\pi}{\hat y} \leq y' \leq \frac{-T}{\hat y}.$$
Then take $y'$ in parallel with $y'_3$ 
and $j$ copies of $y'_2$
to get a value $y = - y' \hat y$ satisfying
$T \leq y \leq T+\pi$.
 \end{proof}

We can now provide the proof of Lemmas~\ref{lem:impa}, \ref{lem:impb} and \ref{lem:impbeta}.
For convenience, we re-state these lemmas here.

\newcounter{eqtemp}
\newtheorem*{lemma1}{Lemma 1}
\begin{lemma1} 
\setcounter{eqtemp}{\value{equation}}
\setcounter{equation}{4}
Suppose $q\notin [0,5]$
 and that
$\alpha_1\notin [-2,0]$, $\alpha_2\in(-2,0)$ and $\alpha_3<-1$.
Given a positive 
constant~$\rho$
which is  sufficiently small with respect to~$q$, $\alpha_1$, $\alpha_2$, and
$\alpha_3$, 
there is a planar graph $\Upsilon$ (depending on  
$\rho$) and a weight 
function $\hatw:E(\Upsilon)\rightarrow \{\alpha_1,\alpha_2,\alpha_3\}$ 
that implements a weight 
$a$, such that
\begin{gather}
  A^- \leq |a| \leq A^+,  \\
  q + \rho  < f(a) \leq q + 2 \rho, \\
  | f(a)|\geq  \eta. 
\end{gather}
The size of $\Upsilon$ is at most a polynomial in 
$\log(\rho^{-1})$.
\setcounter{equation}{\value{eqtemp}}
\end{lemma1}

\begin{proof}
First, suppose $q>5$. 
We start by noting that any $a$ that satisfies (\ref{eqa2})
also has $\frac12 \leq a \leq q$, so $A^- \leq |a| \leq A^+$
and $| f(a)|>\eta$. So we just need to see how to implement a value of $a$
that satisfies~(\ref{eqa2}).
We will use Lemma~\ref{lem:shiftbigqbigT} with the target value $T^*$
being the solution to the equation $f(T^*)=q+2\rho$
and the error value $\pi = \rho^2$.
As noted above, $T^*\in[A^-,A^+]$.
To see that $f(T^*-\rho^2)> q+\rho$,
note that
$$f(T^*) - f(T^* - \rho^2) = 
(6T^* + 3 (T^*)^2) \rho^2 - (3 + 3 T^*) \rho^4 + \rho^6.$$
This is at most $\rho$, as required, as long as $\rho$ is sufficiently small with respect to~$T^*$ (which is in between
$A^-$ and $A^+$, which depend only on~$q$).
Since $\rho$ is assumed to be sufficiently small with respect to~$q$, this is the desired result.
  
Now suppose $q<0$.
Start by noting that if $y$ satisfies
$f(-3-y) \leq q+2\rho \leq q/2$ then
$y\geq \eta$. Also, if $y$ satisfies
$f(-3-y) \geq q+\rho\geq q$ then
$y\leq y^*$. Thus, if $a=-3-y$ satisfies (\ref{eqa2}) then  $-a \in [A^-,A^+]$.
We will now argue that these conditions also imply $|f(a)|>\eta$.
To see this, check that $f(-3-x/2)+x/2$ is negative for $x>0$.
Therefore, taking $x=-q$, we get $f(-3+q/2)<q/2$. By the definition of $\eta$, we find that $q/2<-\eta$.
Thus, for $-a\in[A^-,A^+]$, we have $f(a)\leq q/2 < -\eta$, as required.
So we just need to see how to find a value of $a$
that satisfies~(\ref{eqa2}). This is now essentially the same as  the $q>5$ case except that
we use Lemma~\ref{lem:shiftsmallT}.
 \end{proof}

\newtheorem*{lemma2}{Lemma 2}
\begin{lemma2} 
\setcounter{eqtemp}{\value{equation}}
\setcounter{equation}{7}
Suppose $q\notin [0,5]$
 and that
$\alpha_1\notin [-2,0]$, $\alpha_2\in(-2,0)$ and $\alpha_3<-1$.
Suppose, for a positive value~$\rho$,
which is  sufficiently small with respect to~$q$, $\alpha_1$, $\alpha_2$, 
and $\alpha_3$,
the value~$a$ satisfies inequalities (\ref{eqa1}), (\ref{eqa2}) and~(\ref{eqa3}).
Let 
\begin{equation}
  c = a^2+3a+q
\end{equation}
Given a positive 
constant~$\hat\rho$
which is sufficiently small with respect to~$q$, $\alpha_1$, $\alpha_2$, and $\alpha_3$,
there is a planar graph $\Upsilon$ (depending on  
$\hat \rho$) and a weight 
function $\hatw:E(\Upsilon)\rightarrow \{\alpha_1,\alpha_2,\alpha_3\}$ 
that implements a weight 
$b$, such that
\begin{gather}
B^- \leq |b| \leq B^+, \>\mbox{and } \\
  - \hat \rho \leq b + c \leq \hat \rho. 
\end{gather}
The size of $\Upsilon$ is at most a polynomial in 
$\log({\hat \rho}^{-1})$.
\setcounter{equation}{\value{eqtemp}}
\end{lemma2}
 
 \begin{proof}

First, suppose $q>5$.
Note that   $B^- \leq c - 1\leq c - \hat \rho$ and
$c+\hat \rho \leq c+1\leq B^+$
so it suffices to implement a value $b$ with
$-c-\hat \rho \leq b \leq -c + \hat \rho$.
 So we use 
Lemma~\ref{lem:shiftsmallT}   choosing target~$T^*=-c$ and
$\pi=\hat \rho$.  From our observation above, $-T^*\in[B^-+1,B^+-1]$.
  
Next, suppose $q<0$.
By equation~(\ref{eqa2}) (using the fact that $a<0$),
$$\frac{q+2\rho}{a} \leq a^2 + 3 a \leq \frac{q+\rho}{a},$$
so since $a\leq -3$,
$$-\left(
\frac{4}{3}|q|+1\right) \leq
\frac{q+2}{a}+q <
\frac{q+2\rho}{a} + q \leq
a^2 + 3 a + q \leq \frac{q+\rho}{a}+q \leq q
\left(1+\frac{1}{a}\right)
\leq - \frac{2}{3}|q|
< 0.$$
Thus, $-c-\hat\rho \geq B^-$ and $-c+\hat\rho \leq B^+$.
So it suffices to implement a value $b$ with
$-c-\hat\rho \leq b \leq -c + \hat\rho$.  
For this, we just use the argument in Lemma~\ref{lem:shiftsmallqbigT} 
 choosing target $T^*=-c$ and
$\pi=\hat\rho$.
\end{proof}
 
 {\bf Lemma \ref{lem:impbeta}}
 {\sl
Suppose $q\notin [0,5]$
 and that $\alpha_2\in(-2,0)$.
Given a positive 
constant~$\rho$
which is  sufficiently small with respect to~$q$ and $\alpha_2$, there is a planar graph $\Upsilon$ (depending on  
$\rho$) and a weight 
function $\hatw:E(\Upsilon)\rightarrow \{ \alpha_2 \}$ 
that implements a weight 
$\beta$, such that $|1+\beta|\leq \rho$.
The size of $\Upsilon$ is at most a polynomial in 
$\log(\rho^{-1})$.} 

\begin{proof}
This is already done in~\cite{tuttepaper}.
To implement $\beta$, choose a positive integer $k$ such that
$|(\alpha_2+1)^k|<\rho$ then implement $\beta$ by $k$-thickening
$\alpha_2$.  \end{proof}

\section{The lower branch of $q=3$}

The following is NP-hard \cite{GJS}.
\decisionprob{\textsc{Planar $3$-Colouring}.}%
{A  planar graph $G$.}%
{Does $G$ have a proper $3$-colouring?}

The following lemma gives hardness for approximating the Tutte polynomial on the
lower branch of the $q=3$ hyperbola. See Figure~\ref{fig:two}.
\begin{lemma}
\label{lem:q=3}
Suppose $\mathrm{RP}\not=\mathrm{NP}$. Then
there is no FPRAS for $\tutte(x,y)$
when $(x,y)$ satisfies $(x-1)(y-1)=3$ and $x,y < 1$.
\end{lemma}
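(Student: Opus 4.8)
The plan is to reduce \textsc{Planar $3$-Colouring} to the problem of approximating $\tutte(x,y)$ along the lower branch of $q=3$, exploiting the fact that the Tutte polynomial on $H_3$ is (a multiple of) the partition function of the $3$-state Potts model, and that with $q=3$ and edge weight $w=-1$ the partition function $Z(G;3,w)$ counts exactly the number of proper $3$-colourings of $G$. So the strategy is: starting from a point $(x,y)$ with $(x-1)(y-1)=3$ and $x,y<1$, which in $(q,\alpha)$ coordinates is $q=3$ and $\alpha=y-1\in(-1,0)$ (since $y<1$ forces $x<1$ iff $y>\tfrac{-1}{\ldots}$; more simply $x,y<1$ and $(x-1)(y-1)=3$ put $\alpha=y-1$ strictly between $-2$ and $0$), I want to \emph{shift} from $(q,\alpha)$ to $(q,-1)$, i.e. implement the weight $-1$ using copies of the gadget built from edges of weight $\alpha$. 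Once that shift is available, given a planar graph $G$ whose $3$-colourability we wish to decide, we replace every edge of $G$ by the shift gadget to obtain a planar weighted graph whose multivariate Tutte polynomial is an easily-computed multiple of $Z(G;3,-1)$, which is positive iff $G$ is $3$-colourable; an FPRAS would then distinguish $Z=0$ from $Z\ge 1$, giving $\mathrm{RP}=\mathrm{NP}$.

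The key steps, in order, are: (i) check the coordinate translation — verify that $x,y<1$ together with $(x-1)(y-1)=3$ is equivalent to $q=3$, $\alpha\in(-2,0)$ (indeed $\alpha=y-1\in(-2,0)$ since $y<1$ gives $\alpha<0$, and $x<1$ gives $\alpha>-2$ because $x=1+3/\alpha<1\iff 3/\alpha<0$, automatic, while the lower branch means we are on the component with $x,y<1$, forcing $\alpha>-2$); (ii) produce the shift to weight $-1$. Here I would use the series and parallel composition formulas recalled in the excerpt. Starting from $\alpha\in(-2,0)$ with $q=3$: a $2$-thickening gives weight $(1+\alpha)^2-1\in(-1,0)$, and more generally one can move within $(-1,0)$ and, via series compositions, reach weights with $x$-coordinate a high power, hence $y$-coordinate close to $1$ from below, i.e. weight close to $0^-$; combining thickenings and stretches lets one approximate any target weight in a suitable range. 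The cleanest route is probably: first shift to a point with $y$ very close to $0$ (equivalently $\alpha$ close to $-1$), then adjust. Actually the most direct construction: note that for $q=3$, two edges of weight $\alpha$ in series give weight $\alpha^2/(3+2\alpha)$; iterating, one can home in on the fixed-point structure and reach weight exactly $-1$ or arbitrarily close to it, and since $Z(G;3,\cdot)$ is a polynomial in the weight with integer-bounded coefficients, ``arbitrarily close'' suffices to decide whether the colouring count is zero. (iii) Assemble the reduction: apply the shift to every edge of the input planar graph $G$, track the easily-computed multiplicative constant from Equation~(\ref{eq:shift}), invoke Equation~(\ref{eq:rcequiv}) to pass between $Z$ and $T$, and conclude as in the proof of Theorem~\ref{thm:shift}.

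I expect the main obstacle to be step (ii): showing that one can \emph{efficiently} (polynomial-size gadgets) shift from an arbitrary $\alpha\in(-2,0)$ with $q=3$ to a weight exponentially close to $-1$, while keeping the construction planar and the bookkeeping constant $Z_{s|t}(\Upsilon)$ nonzero and easy to evaluate. The planarity is free because series and parallel compositions preserve planarity, and the constants from series/parallel compositions are the explicit quantities $q+w_1+w_2$ and $1$ recorded in the excerpt, so they are easy to evaluate; the real work is the numerical/dynamical argument that the reachable weights are dense near $-1$ and that the required gadget size is only polynomial in the number of bits of precision — this is the same flavour of ``moving around a hyperbola'' analysis carried out in Sections~\ref{sec:shiftbigq} and~\ref{sec:shiftnegq}, specialised to $q=3$, and adapted to aim at a point where the Potts partition function becomes the $3$-colouring count. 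A secondary subtlety is ensuring that the approximate shift still lets us decide $3$-colourability exactly: since $Z(G;3,-1)$ is a nonnegative integer and the perturbed value $Z(G;3,-1+\xi)$ differs from it by at most $(\text{poly}(n))\cdot|\xi|$ for $|\xi|$ small, choosing $|\xi|$ exponentially small makes the perturbed value lie in $[\tfrac34,\infty)$ when $G$ is colourable and in $[0,\tfrac14]$ otherwise, so a factor-$\tfrac32$ approximation distinguishes the two cases, exactly as in the endgame of Lemma~\ref{lem:bedrock}.
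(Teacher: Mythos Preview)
Your core idea---reduce \textsc{Planar $3$-Colouring} by pushing the edge weight toward $-1$ so that the Potts partition function approaches the $3$-colouring count---is exactly what the paper does, but the paper's execution is far simpler and your sketch has a couple of slips. First, your claim that $x,y<1$ on $H_3$ forces $\alpha\in(-2,0)$ is false (e.g.\ $y=-5$, $x=\tfrac12$ gives $\alpha=-6$); the paper handles this by restricting to $-1<y<1$ and invoking planar duality $T(G;x,y)=T(G^*;y,x)$ for the remaining points. Second, and more importantly, none of your series-composition dynamics, fixed-point discussion, or separate perturbation bound is needed. The paper simply $k$-thickens the input graph~$G$ for a sufficiently large \emph{even}~$k$ and applies the Potts identity
\[
T(G^k;x,y)=(y-1)^{-n}(x-1)^{-\kappa}\sum_{\sigma:V\to\{1,2,3\}} y^{\mathrm{mono}(\sigma)},
\]
where $\mathrm{mono}(\sigma)$ counts monochromatic edges of $G^k$. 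Because $k$ is even, every $\mathrm{mono}(\sigma)$ is even and hence every summand is positive; the sum is therefore at least~$1$ if $G$ is $3$-colourable and at most $3^n y^k\le\tfrac14$ otherwise (choosing $k$ of order $n/\log(1/|y|)$). Thus the even-$k$ thickening is simultaneously the ``shift to weight near $-1$'' you are looking for and a free positivity guarantee---which also dissolves the sign issue lurking in your assertion that the perturbed value lies in $[0,\tfrac14]$ in the non-colourable case.
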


\begin{proof}

We will consider a point $(x,y)$ with $-1<y<1$.
The remaining cases follow by symmetry between~$x$ and~$y$ as in the proof
of Corollary~\ref{cor:x<-1}.
Let $G=(V,E)$ be an input to \textsc{Planar $3$-Colouring} with $n$ vertices.
For an even positive integer~$k$, let $G^k$ be the graph formed from $G$ by $k$-thickening every edge and let $E^k$
be its edge set.
It is well-known (see, for example, \cite[Section 5.1]{tuttepaper}) that, assuming $(x-1)(y-1)=3$,
$$ T(G^k;x,y) = {(y-1)}^{-n}{(x-1)}^{-\kappa(V,E^k)} 
\sum_{\sigma:V\rightarrow\{1,2,3\}}
y^{\mathrm{mono}(\sigma)},
$$
where $\mathrm{mono}(\sigma)$ is the number of edges in~$E^k$
that are monochromatic under the map~$\sigma$.
Note that $\mathrm{mono}(\sigma)$ is an even number, since $k$ is.
Thus,
$\sum_{\sigma:V\rightarrow\{1,2,3\}}
y^{\mathrm{mono}(\sigma)}$ is 
a positive number which is
at least $1$
if $G$ has a proper $3$ colouring
and is at most 
$3^n y^k$ 
otherwise.
Choosing $$k = \left\lceil
\frac{\log(4 \cdot 3^n)}{\log(1/y)}
\right\rceil,$$
we have $3^n y^k \leq 1/4$, so
a $2$-approximation to $T(G^k;x,y)$ would enable
us to determine whether or not $G$ is $3$-colourable.
\end{proof}

\bibliographystyle{plain}
\bibliography{TutteBib}

\end{document}